\pdfoutput=1 

\documentclass[letterpaper, 10 pt, conference]{ieeeconf}  

\IEEEoverridecommandlockouts                              

\usepackage{amsfonts,amsmath, amssymb, mathtools, xcolor, hyperref, graphicx,algorithm, bbm, bm, algpseudocode, booktabs, float, optidef,subcaption,cite, multirow, tikz} 
\hypersetup{hidelinks}
 \usetikzlibrary{arrows.meta,bending}

\usepackage{flushend}
\usepackage{amsthm}
\usepackage{placeins}   

  \makeatletter
  \g@addto@macro\normalsize{%
    \setlength\abovedisplayskip{3pt plus 2pt minus 2pt}%
    \setlength\belowdisplayskip{3pt plus 2pt minus 2pt}%
    \setlength\abovedisplayshortskip{0pt plus 2pt}%
    \setlength\belowdisplayshortskip{2pt plus 2pt minus 1pt}%
  }
  \def\thm@space@setup{%
    \thm@preskip=3pt plus 1pt minus 1pt
    \thm@postskip=\thm@preskip}
  \makeatother
\theoremstyle{plain} 
\newtheorem{theorem}{Theorem}

\newtheorem{proposition}{Proposition}
\newtheorem{corollary}{Corollary}
\usepackage{wrapfig}
\theoremstyle{definition} 
\newtheorem{definition}{Definition}

\theoremstyle{remark}

\newtheorem*{remark*}{Remark}
\title{Topology-Aware Congestion Pricing: \\ Demand-Robust Routing using the Forman–Ricci Curvature
}

\author{Sheryl Paul$^{1}$, Samuel J. Williams$^{1}$, Kexin Wang$^{1}$,
Xin Qin$^{2}$, Ruolin Li$^{1}$, Jyotirmoy V. Deshmukh$^{1}$%
\thanks{$^{1}$ University of Southern California,
Los Angeles, CA 90089, USA.
{\tt\small \{sherylpa, kwang255, samwilliams, ruolinli, jdeshmuk\}@usc.edu}}%
\thanks{$^{2}$ California State University, Long Beach,
Long Beach, CA 90840, USA.
{\tt\small xin.qin@csulb.edu}}%
}

\newcommand{\graph}{G}
\newcommand{\nodes}{V}
\newcommand{\edges}{\mathcal{E}}
\newcommand{\numnodes}{n}
\newcommand{\numedges}{m}

\newcommand{\ods}{\mathcal{W}}
\newcommand{\od}{w}
\newcommand{\origin}{s}
\newcommand{\dest}{t}
\newcommand{\demand}{d}

\newcommand{\ppaths}{\mathcal{P}}
\newcommand{\ppath}{p}
\newcommand{\flowvector}{x}
\newcommand{\flowvectors}{\mathcal{X}}

\newcommand{\edgelatency}{\ell_e}

\newcommand{\syscost}{\mathcal{J}}

\newcommand{\beck}{\Phi}

\newcommand{\regparam}{\lambda}

\newcommand{\poa}{\mathrm{PoA}}

\newcommand{\formcurve}{F_e}

\definecolor{bestcell}{RGB}{212,237,218}

\newcommand{\penweight}{\mu_e}           
\newcommand{\penweightvec}{\mu}          
\newcommand{\latenop}{\mathbf{C}}        

\begin{document}

\maketitle
\thispagestyle{empty}
\pagestyle{empty}

\begin{abstract}
Congestion pricing is widely used to improve transportation network efficiency where individual decentralized route choices  can lead to system-level inefficiencies. Pigouvian tolls achieve the system optimum but depend on demand
and can lose effectiveness under demand shifts, particularly on
structurally critical bottleneck edges.
We propose a topology-aware regularization of the Beckmann potential governing the (Wardrop) equilibrium by augmenting precomputed Pigouvian tolls with an offline structural penalty derived from Forman--Ricci curvature (FRC), which captures bottleneck structures.
We prove that the resulting equilibrium is well-defined, that stronger regularization reduces flow on penalized edges, and that efficiency loss depends on how closely the structural penalty approximates optimal Pigouvian tolls.
Experiments on eight real-world networks under targeted and uniform demand perturbations show reduced bottleneck flow with near-optimal efficiency. FRC achieves reductions comparable to edge betweenness at substantially lower computational cost, making it a practical, robust complement to classical congestion pricing.
\end{abstract}

\section{INTRODUCTION}

\noindent Large-scale infrastructure systems such as transportation and supply
chain networks rely on decentralized, individual decision-making based on
local objectives. In transportation networks, selfish
route choices produce a Wardrop equilibrium that can be inefficient relative
to the social optimum (SO), as quantified by the Price of Anarchy
\cite{roughgarden2002bad}. These systems are also vulnerable to demand shifts,
which can disproportionately congest structurally critical edges. This
motivates interventions that are both efficient and robust.

\noindent Congestion pricing controls this equilibrium by charging users for the delays
they impose on others \cite{pigou1920economics, beckmann1956studies}.
Pigouvian tolls calibrated to a known demand align selfish routing with the SO
\cite{roughgarden2002bad, roughgarden2003price, cole2003pricing}, but must
generally be recomputed when demand changes. Consequently, tolls calibrated
at a base demand may lose effectiveness under daily variation, long-term
growth, or unexpected events \cite{friesz2011dynamiccongestion,
lo2002rstochastic}. 

\noindent From a control perspective, congestion pricing can be viewed as the
design of an input that shapes the equilibrium of a
decentralized system. The challenge is to design tolls that both
achieve efficiency under nominal conditions and remain robust to
exogenous disturbances such as demand variation. Existing robust and bilevel formulations
\cite{zhang2022drso, schmidt2024robust} require demand-specific
optimization, while adaptive methods rely on online recomputation
\cite{li2023adaptive}. We seek an offline structural correction
that remains fixed under demand variation.
\begin{figure}[!t]

\centering
\begin{tikzpicture}[
  >=Stealth,
  nd/.style={circle, draw=black, semithick, minimum size=5mm, inner sep=0pt},
  src/.style={nd, fill=green!60!black},
  dst/.style={nd, fill=red!60!black},
  mid/.style={nd, fill=blue!25},
  lbl/.style={fill=white, inner sep=1pt, font=\scriptsize, rounded corners=1pt},
  scale=0.28,
]

\node[font=\footnotesize\bfseries, transform shape=false] at (6, 4.2)
  {Example Network};

\node[src] (s) at (0, 0) {};
\node[mid] (a) at (3, 2) {};
\node[mid] (b) at (3, -2) {};
\node[mid] (m) at (6, 0) {};
\node[mid] (c) at (9, 2) {};
\node[mid] (d) at (9, -2) {};
\node[dst] (t) at (12, 0) {};

\node[font=\scriptsize\bfseries, text=green!50!black, transform shape=false] at (0, -1.4) {$s$};
\node[font=\scriptsize\bfseries, text=red!50!black, transform shape=false] at (12, -1.4) {$t$};
\node[font=\scriptsize\bfseries, text=blue!40!black, transform shape=false] at (6, -1.4) {$m$};

\draw[thin, ->] (s) -- (a) node[lbl, midway, above left=-2pt] {$f_e$};
\draw[thin, ->] (a) -- (c) node[lbl, midway, above] {$1$};
\draw[thin, ->] (c) -- (t) node[lbl, midway, above right=-2pt] {$1$};
\draw[thin, ->] (s) -- (b) node[lbl, midway, below left=-2pt] {$1$};
\draw[thin, ->] (b) -- (d) node[lbl, midway, below] {$1$};
\draw[thin, ->] (d) -- (t) node[lbl, midway, below right=-2pt] {$f_e$};
\draw[thin, ->] (a) -- (m) node[lbl, midway, above=-1pt] {$0$};
\draw[thin, ->] (b) -- (m) node[lbl, midway, below=-1pt] {$0$};
\draw[red!70!black, line width=1pt, ->] (m) -- (c)
  node[lbl, midway, below=-1pt, text=red!70!black, font=\scriptsize\bfseries] {$l_e$};
\draw[red!70!black, line width=1pt, ->] (m) -- (d)
  node[lbl, midway, above=-1pt, text=red!70!black, font=\scriptsize\bfseries] {$l_e$};

\node[font=\scriptsize\bfseries, text=red!70!black, transform shape=false]
  at (6, -3.5) {Bottleneck};
\node[font=\scriptsize, transform shape=false]
  at (6, -4.3) {\textit{Demand:} 1 unit, $s \to t$};

\def\rxoff{15}
\node[font=\footnotesize\bfseries, transform shape=false] at (\rxoff+6, 4.2)
  {Wardrop Equilibrium};
\node[font=\scriptsize, text=black!80, transform shape=false] at (\rxoff+6, 3.2)
  {$J = 4.5$};

\node[src, minimum size=4mm] (ws) at (\rxoff+0, 0) {};
\node[mid, minimum size=4mm] (wa) at (\rxoff+3, 2) {};
\node[mid, minimum size=4mm] (wb) at (\rxoff+3, -2) {};
\node[mid, minimum size=4mm] (wm) at (\rxoff+6, 0) {};
\node[mid, minimum size=4mm] (wc) at (\rxoff+9, 2) {};
\node[mid, minimum size=4mm] (wd) at (\rxoff+9, -2) {};
\node[dst, minimum size=4mm] (wt) at (\rxoff+12, 0) {};

\draw[gray!40, thin, dashed, ->] (wa) -- (wc);
\draw[gray!40, thin, dashed, ->] (wb) -- (wd);
\draw[orange!80!black, line width=1pt, ->] (ws) -- (wa);
\draw[orange!80!black, line width=0.7pt, ->] (ws) -- (wb);
\draw[orange!80!black, line width=1pt, ->] (wa) -- (wm);
\draw[orange!80!black, line width=0.7pt, ->] (wb) -- (wm);
\draw[red!70!black, line width=1.2pt, ->] (wm) -- (wc);
\draw[red!70!black, line width=1.2pt, ->] (wm) -- (wd);
\draw[orange!80!black, line width=1pt, ->] (wc) -- (wt);
\draw[orange!80!black, line width=0.7pt, ->] (wd) -- (wt);

\node[font=\scriptsize, text=red!80!black, transform shape=false]
  at (\rxoff+6, -3.5) {bridge overloaded};

\def\byoff{-9.5}
\node[font=\footnotesize\bfseries, transform shape=false] at (6, \byoff+4.2)
  {Pigouvian ($d_0$)};
\node[font=\scriptsize, text=black!80, transform shape=false] at (6, \byoff+3.2)
  {$J = 3.8$ (optimal)};

\node[src, minimum size=4mm] (ps) at (0, \byoff) {};
\node[mid, minimum size=4mm] (pa) at (3, \byoff+2) {};
\node[mid, minimum size=4mm] (pb) at (3, \byoff-2) {};
\node[mid, minimum size=4mm] (pm) at (6, \byoff) {};
\node[mid, minimum size=4mm] (pc) at (9, \byoff+2) {};
\node[mid, minimum size=4mm] (pd) at (9, \byoff-2) {};
\node[dst, minimum size=4mm] (pt) at (12, \byoff) {};

\draw[orange!80!black, line width=0.8pt, ->] (ps) -- (pa);
\draw[orange!80!black, line width=0.8pt, ->] (ps) -- (pb);
\draw[orange!60!black, line width=0.6pt, ->] (pa) -- (pc);
\draw[orange!60!black, line width=0.6pt, ->] (pb) -- (pd);
\draw[orange!50!black, line width=0.4pt, ->] (pa) -- (pm);
\draw[orange!50!black, line width=0.4pt, ->] (pb) -- (pm);
\draw[red!60!black, line width=0.8pt, ->] (pm) -- (pc);
\draw[red!60!black, line width=0.8pt, ->] (pm) -- (pd);
\draw[orange!80!black, line width=0.8pt, ->] (pc) -- (pt);
\draw[orange!80!black, line width=0.8pt, ->] (pd) -- (pt);

\node[font=\scriptsize, text=orange!80!black, transform shape=false]
  at (6, \byoff-3.5) {optimal at $d_0$; fragile};

\node[font=\footnotesize\bfseries, transform shape=false] at (\rxoff+6, \byoff+4.2)
  {Pigouvian + FRC};
\node[font=\scriptsize, text=black!80, transform shape=false] at (\rxoff+6, \byoff+3.2)
  {$J = 4.0$};

\node[src, minimum size=4mm] (fs) at (\rxoff+0, \byoff) {};
\node[mid, minimum size=4mm] (fa) at (\rxoff+3, \byoff+2) {};
\node[mid, minimum size=4mm] (fb) at (\rxoff+3, \byoff-2) {};
\node[mid, minimum size=4mm] (fm) at (\rxoff+6, \byoff) {};
\node[mid, minimum size=4mm] (fc) at (\rxoff+9, \byoff+2) {};
\node[mid, minimum size=4mm] (fd) at (\rxoff+9, \byoff-2) {};
\node[dst, minimum size=4mm] (ft) at (\rxoff+12, \byoff) {};

\draw[teal!80!black, line width=1pt, ->] (fs) -- (fa);
\draw[teal!80!black, line width=1pt, ->] (fs) -- (fb);
\draw[teal!80!black, line width=1pt, ->] (fa) -- (fc);
\draw[teal!80!black, line width=1pt, ->] (fb) -- (fd);
\draw[teal!80!black, line width=1pt, ->] (fc) -- (ft);
\draw[teal!80!black, line width=1pt, ->] (fd) -- (ft);

\draw[red!30!black, line width=0.2pt, ->] (fa) -- (fm);
\draw[red!30!black, line width=0.2pt, ->] (fb) -- (fm);
\draw[red!30!black, line width=0.2pt, ->] (fm) -- (fc);
\draw[red!30!black, line width=0.2pt, ->] (fm) -- (fd);

\node[font=\scriptsize, text=teal!70!black, transform shape=false]
  at (\rxoff+6, \byoff-3.5) {bridge protected; robust};
\end{tikzpicture}
\caption{%
\small
\textit{\textbf{Top-left:} Example network with bottleneck hub~$m$; red edges:
 steep latency $\ell\!=\!2f_e$, negative FRC.
\textbf{Top-right:} WE concentrates flow on bottleneck. 
\textbf{Bottom-left:} Pigouvian at $d_0$ is optimal but fragile to demand shifts.
\textbf{Bottom-right:} Pig.+FRC diverts flow to direct paths (teal) at
small cost for structural protection.}}
\label{fig:motivating}
\end{figure}

\noindent Another motivation for our approach is that edges connecting weakly linked regions can be
structurally fragile.
Demand increases
along such corridors can therefore concentrate congestion even when tolls are
optimal at the base demand. We address this vulnerability with a \emph{proactive
structural penalty} that discourages bottleneck overuse before congestion
materializes.

\noindent Curvature provides a natural way to quantify such structural fragility.
At a high level, curvature measures how far a space deviates from being
flat; on graphs, it assigns to each edge a score reflecting how well that
edge is supported by its local neighborhood. 
Forman--Ricci curvature (FRC)
\cite{forman2003bochner, sreejith2016forman} provides a local measure of this
fragility. Strongly negative curvature identifies edges connecting regions
with few alternatives, whereas higher curvature indicates greater local
redundancy. FRC has been used for community detection
\cite{ni2019community}, network robustness
\cite{samal2018comparative, weber2017forman}, and bottleneck identification in
graph neural networks \cite{topping2021oversquashing}. To our knowledge, this
work is the first to connect graph curvature with routing-game mechanism
design. 

\noindent We incorporate a \emph{curvature-regularized Beckmann potential} by adding a linear FRC penalty to the Beckmann potential,
equivalently shifting edge costs while preserving the equilibrium
structure. We
combine this penalty with Pigouvian tolls calibrated at a base demand,
producing a \emph{combined toll} designed to remain effective under demand
shifts (Fig.~\ref{fig:motivating}).

\noindent\textbf{Contributions.}
Our main contributions are:
\textit{(1) Well-posedness:} We characterize the regularized equilibrium as a WE under shifted costs.
\textit{(2) Sensitivity:} We establish Hölder sensitivity, Lipschitz sensitivity under positive flow, and non-increasing penalty-weighted flow.
\textit{(3) Efficiency:} We bound social-cost loss through the Pigouvian-toll approximation error.
\textit{(4) Robustness:} We show that alignment with demand-induced toll changes improves the certified bound.
\textit{(5) Validation:} Experiments on eight networks under targeted and uniform stress demonstrate improved bottleneck protection, near-optimal cost, and faster computation.

\noindent The paper outline is as follows: Section~II reviews the traffic
network model, and Beckmann formulation; Section~III
introduces Forman--Ricci Curvature (FRC)-based regularization.
Section~IV establishes well-posedness, sensitivity, and other
theoretical guarantees.
Section~V provides experimental evaluation of the method.

\section{PRELIMINARIES}

\subsection{Traffic Networks and Routing Games}
\noindent A traffic network is a directed graph
$\graph=(\nodes,\edges)$ with $\numnodes=|\nodes|$ nodes and
$\numedges=|\edges|$ edges. Let
$\ods\subseteq\nodes\times\nodes$ be the set of origin--destination (OD)
pairs, with demand $\demand_\od>0$ for each
$\od=(\origin,\dest)\in\ods$. Let $\ppaths_\od$ be the set of simple
directed paths connecting $\origin$ to $\dest$, and set
$\ppaths=\bigcup_{\od\in\ods}\ppaths_\od$.
A path-flow vector
$\flowvector=(\flowvector_\ppath)_{\ppath\in\ppaths}$ is feasible if
\(
\flowvector_\ppath\geq0,
\ \text{and }
\sum_{\ppath\in\ppaths_\od}\flowvector_\ppath=\demand_\od
\quad\forall\,\od\in\ods.
\)
We denote the feasible path-flow set by $\flowvectors$. The induced edge
flow is
\(
f_e(\flowvector)
:=\sum_{\ppath\in\ppaths:\,e\in\ppath}\flowvector_\ppath,
\ \ 
f(\flowvector):=(f_e(\flowvector))_{e\in\edges},
\)
and $\mathcal{E}(\flowvectors)$ denotes the set of feasible edge flows. We write $f_e$ when the dependence on $\flowvector$ is clear.
Each edge $e$ has a continuous, nondecreasing latency function
$\edgelatency:\mathbb{R}_+\to\mathbb{R}_+$. In the experiments we use the
Bureau of Public Roads (BPR) model
\(
\ell_e(f_e)
=t_e^0\left[1+b_e\left({f_e}/{c_e}\right)^{p}\right],
\) \label{def:bpr}
where $t_e^0>0$ is the free-flow travel time, $c_e>0$ is capacity,
$b_e>0$, and typically $p=4$. The cost of path $\ppath$ is
\(
J_\ppath(\flowvector)
:=\sum_{e\in\ppath}\ell_e\!\left(f_e(\flowvector)\right).
\)

\subsection{Wardrop Equilibrium (WE) and Beckmann Potential}
\begin{definition}[Wardrop equilibrium]
A feasible path flow $\flowvector^{\mathrm{WE}}\in\flowvectors$ is a
\emph{Wardrop equilibrium} (WE) if, for every $\od\in\ods$ and
$\ppath,q\in\ppaths_\od$,
\begin{equation}
\flowvector^{\mathrm{WE}}_\ppath>0
\quad\Longrightarrow\quad
J_\ppath(\flowvector^{\mathrm{WE}})
\leq J_q(\flowvector^{\mathrm{WE}}).
    \label{eq:wardrop}
\end{equation}
\end{definition}
Thus, every used path has minimum cost within its OD pair. For separable
nondecreasing latencies, WE path flows are precisely the minimizers of the
Beckmann potential \cite{beckmann1956studies}
\(
\beck(\flowvector)
:=\sum_{e\in\edges}
\int_0^{f_e(\flowvector)}\ell_e(s)\,\mathrm{d}s
\)
over $\flowvectors$. Indeed,
$\partial\beck/\partial\flowvector_\ppath=J_\ppath(\flowvector)$, so its
optimality conditions coincide with~\eqref{eq:wardrop}. For BPR latencies,
the edge term is
\(
t_e^0\left[
f_e+{(b_e f_e^{p+1})}/{((p+1)c_e^p)}
\right].
\)
Under strictly increasing latencies, the equilibrium edge flow is unique,
although its path-flow decomposition need not be. We write
\(
\latenop(f):=(\ell_e(f_e))_{e\in\edges}
\)
for the latency operator.



\subsection{Social Cost, Social Optimum, and Price of Anarchy}
\begin{definition}[Social Cost and Optimum]
    The social cost of an edge flow $f\in\mathcal{E}(\flowvectors)$ is
\(
\syscost(f):=\sum_{e\in\edges}f_e\ell_e(f_e).
\)
A socially optimal path flow and its induced edge flow are denoted by
\(
\flowvector^{\mathrm{SO}}
\in\arg\min_{\flowvector\in\flowvectors}
\syscost(f(\flowvector)),
\text{ and }
f^{\mathrm{SO}}:=f(\flowvector^{\mathrm{SO}})
\) respectively.
\end{definition}
For a WE edge flow $f^{\mathrm{WE}}$, the Price of Anarchy is
\(
\poa:=\frac{\syscost(f^{\mathrm{WE}})}
{\syscost(f^{\mathrm{SO}})}
\geq1.
\)
This gap arises because drivers minimize personal travel time
without accounting for delays imposed on others.
\begin{definition}[Pigouvian Toll]
For differentiable separable latencies and fixed demand,
the Pigouvian toll associated with $f^{\mathrm{SO}}$ is
\(
\tau_e^\star
:=f_e^{\mathrm{SO}}\ell'_e(f_e^{\mathrm{SO}}).
\)
The perceived cost $\ell_e(f_e)+\tau_e^\star$ implements
$f^{\mathrm{SO}}$ as a tolled WE~\cite{beckmann1956studies}, i.e., selfish route choice coincides with the socially optimal flow.
\end{definition}

\section{Problem Statement and Solution Approach}
\label{sec:model}
 
\noindent \textbf{Problem Statement.}
We view congestion pricing as a control problem in which an input toll vector shapes the equilibrium of a decentralized routing
system under demand disturbances.
Our objective is to reduce congestion
on structurally fragile edges while maintaining robustness to
demand variations without recomputing the tolls. To this end,
we augment the base-demand Pigouvian toll
$\tau^0:=\tau^\star(d_0)$ with a fixed structural penalty derived
from Forman--Ricci curvature (FRC).

\noindent\textbf{Curvature-Regularized Routing Model.}
For an unweighted directed graph, we compute the \emph{Forman--Ricci
curvature} (FRC) \cite{forman2003bochner,sreejith2016forman} on its simple
undirected projection $\graph^{\mathrm u}$, obtained by replacing each
directed or reciprocal node pair by one undirected edge. For
$e=(u,v)$, $\formcurve:=4-\deg_{\graph^{\mathrm u}}(u)
-\deg_{\graph^{\mathrm u}}(v)$, where
$\deg_{\graph^{\mathrm u}}(\cdot)$ denotes ordinary degree.\footnote{Equivalently,
$\formcurve=2-\deg_u(e)-\deg_v(e)$ when $\deg_u(e)$ and $\deg_v(e)$ count
incident edges other than $e$. If triangles are included as
two-dimensional cells, the augmented curvature is
$\formcurve^{\#}=\formcurve+3T_e$, where $T_e$ is the number of triangles
containing $e$ \cite{samal2018comparative}. We use ordinary FRC.}
Thus, strongly negative curvature corresponds to a large sum of endpoint
degrees; ordinary FRC does not directly encode common neighbors or
triangles. After computing the undirected node degrees, each edge score
requires $O(1)$ time and all scores require $O(|\nodes|+|\edges|)$ time.
The calculation uses graph topology and requires no traffic data.
\noindent\textbf{Base-demand AON loading.}
For each OD pair, let
\(
\ppaths_\od^0
:=\arg\min_{\ppath\in\ppaths_\od}
\sum_{e\in\ppath}t_e^0
\)
be its set of free-flow shortest paths. Choose coefficients
$\alpha_{\od,\ppath}\geq0$ satisfying
$\sum_{\ppath\in\ppaths_\od^0}\alpha_{\od,\ppath}=1$. The all-or-nothing
(AON) edge flow at $d_0$ is
$f_e^{\mathrm{AON}}
:=\sum_{\od\in\ods}\demand_\od
\sum_{\ppath\in\ppaths_\od^0}
\alpha_{\od,\ppath}\,\mathbf{1}\{e\in\ppath\}.$
A deterministic shortest-path tie rule corresponds to selecting one
coefficient equal to one for each OD pair.

\noindent \textbf{Design of the penalty weight. }
\noindent For each edge, we design the \emph{curvature penalty weight}:
\setlength{\abovedisplayskip}{3pt plus 1pt minus 1pt}
\setlength{\belowdisplayskip}{3pt plus 1pt minus 1pt}
\setlength{\abovedisplayshortskip}{0pt plus 1pt}
\setlength{\belowdisplayshortskip}{2pt plus 1pt minus 1pt}
\begin{equation}
  \penweight
  := (\max\{0, -\formcurve\})^2 \cdot \ell_e'(f_e^{\mathrm{AON}})
  \;\geq 0, \forall e \in \edges
  \label{eq:penweight}
\end{equation}
It
combines the following components, each serving a distinct role.
 \emph{(1) Thresholding}: $\max\{0, -\formcurve\}$ restricts the penalty to negatively curved edges (interpreted as structural bottlenecks), while squaring emphasizes strongly negative curvature.
\emph{(2) Latency derivative scaling}: Multiplying by $\edgelatency'(f_e^{\mathrm{AON}})$ scales the penalty by congestion
sensitivity.
 \footnote{$\penweightvec$ is computed once at a base demand $d_0$ and then held fixed as demand varies. In contrast to Pigouvian tolls, which require recomputation via equilibrium solves, $\penweightvec$ depends only on a single shortest-path pass and thus provides a reusable structural correction under demand shifts.}

 
\begin{definition}[Regularized Beckmann potential]
\label{def:regpot}
Let $\mu=(\mu_e)_{e\in\edges}\in\mathbb{R}_{\ge0}^{|\edges|}$
be a fixed edge-penalty vector. For $\lambda\ge0$:
$\Phi_\lambda(x)
:=\sum_{e\in\edges}\int_0^{f_e(x)}
   [\ell_e(s)+\lambda\mu_e]\,\mathrm{d}s =\Phi(x)+\lambda\langle\mu,f(x)\rangle$. 
\end{definition}
The shifted edge costs are
$\tilde{\ell}_e^\lambda(z):=\ell_e(z)+\lambda\mu_e$,
with path costs
$\tilde{J}_p^\lambda(x):=J_p(x)+\lambda\sum_{e\in p}\mu_e$
and operator
$\tilde{\mathbf C}_\lambda(f):=\mathbf C(f)+\lambda\mu$.
Thus, a minimizer is a WE computable with a standard
additive-toll solver.
Section~\ref{subsec:existence} gives existence and uniqueness
conditions.

This potential has the standard Beckmann form with shifted edge costs
$\tilde{\ell}_e^\lambda(z):=\ell_e(z)+\lambda\mu_e$. The corresponding
path cost is $\tilde{J}_p^\lambda(x):=J_p(x)+\lambda\sum_{e\in p}\mu_e$,
and the latency operator is
$\tilde{\mathbf C}_\lambda(f):=\mathbf C(f)+\lambda\mu$.
Because $\lambda\mu_e$ is constant with respect to edge flow, whenever a minimizer exists, it
is a WE under the shifted costs and can be computed
with a standard WE solver supporting additive edge tolls. The conditions ensuring
existence, characterization, and uniqueness are introduced in
Section~\ref{subsec:existence}.
Now, we go on to discuss the theoretical properties of our regularized Beckman Potential.

\section{THEORETICAL ANALYSIS}

\noindent All results are stated under one or more of the following
assumptions. \textbf{(A1) Latency regularity:} For every
$e\in\edges$, the latency function
$\ell_e:\mathbb R_{\ge0}\to\mathbb R_{\ge0}$ is continuously
differentiable and nondecreasing.
\textbf{(A2) Strict monotonicity:} For every $e\in\edges$,
$\ell_e$ is strictly increasing.
\noindent We establish equilibrium existence and edge-flow uniqueness under
suitable assumptions (Sec.\ref{subsec:existence}), analyze toll
sensitivity and penalty-weighted flow
(Sec.\ref{subsec:sensitivity}), and bound efficiency loss through the mismatch between the structural toll and Pigouvian 
toll (Sec.\ref{subsec:efficiency}). We characterize
when a fixed structural correction improves the certified bound
for stale Pigouvian tolls under demand shifts
in Sec.\ref{subsec:robustness}.
\subsection{Existence and Characterization}
\label{subsec:existence}

\begin{proposition}[Existence, characterization, and uniqueness]
\label{prop:reg_eq_basic}
Under \emph{(A1)}, for every $\lambda\ge0$:
\emph{(i)} The potential $\Phi_\lambda$ is continuous and convex on
the feasible path-flow set $\mathcal X$ and admits at least one
minimizer.
\emph{(ii)} A feasible path flow $x^\lambda$ minimizes
$\Phi_\lambda$ if and only if its induced edge flow $f^\lambda$
satisfies
$\langle
\tilde{\mathbf C}_\lambda(f^\lambda),g-f^\lambda
\rangle
\ge0,
\  g\in\mathcal E(\mathcal X).$
Equivalently, $x^\lambda$ is a Wardrop equilibrium under the shifted
edge costs
$\tilde{\ell}_e^\lambda(z)=\ell_e(z)+\lambda\mu_e$.
\emph{(iii)} Under the additional assumption \emph{(A2)}, the
induced equilibrium edge flow $f^\lambda$ is unique, although its
decomposition into path flows need not be unique.
\end{proposition}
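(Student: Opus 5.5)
The plan is to treat $\Phi_\lambda$ as a composition of a convex separable edge function with the linear map $x\mapsto f(x)$, and to apply the standard variational-inequality arguments for Beckmann potentials.

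\emph{Part (i).} Define $h_e^\lambda(z):=\int_0^z[\ell_e(s)+\lambda\mu_e]\,\mathrm{d}s$ for $z\ge0$. By (A1), $h_e^\lambda$ is $C^1$ with derivative $\ell_e(z)+\lambda\mu_e$. This derivative is nondecreasing, so $h_e^\lambda$ is convex. The map $x\mapsto f_e(x)=\sum_{p\ni e}x_p$ is linear and maps $\mathcal X$ into $\mathbb R_{\ge0}$. Hence $\Phi_\lambda(x)=\sum_e h_e^\lambda(f_e(x))$ is a sum of convex functions of affine maps, so it is convex and continuous on $\mathcal X$. The set $\mathcal X$ is a product of scaled simplices, since $\mathcal P_w$ is finite because paths are simple. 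It is therefore nonempty, compact and convex, and Weierstrass gives a minimizer.

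\emph{Part (ii).} $\Phi_\lambda$ is convex and differentiable on a convex set, so $x^\lambda$ is a minimizer if and only if $\langle\nabla\Phi_\lambda(x^\lambda),y-x^\lambda\rangle\ge0$ for all $y\in\mathcal X$. The chain rule gives $\partial\Phi_\lambda/\partial x_p=\sum_{e\in p}\tilde\ell_e^\lambda(f_e)=\tilde J_p^\lambda(x)$. Write $\Delta$ for the edge--path incidence matrix, so that $f=\Delta x$. Then $\langle\nabla\Phi_\lambda(x),y-x\rangle=\langle\tilde{\mathbf C}_\lambda(f(x)),f(y)-f(x)\rangle$. As $y$ ranges over $\mathcal X$, $f(y)$ ranges over $\mathcal E(\mathcal X)$, which yields the edge-form variational inequality.

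For the Wardrop equivalence, the path-form inequality decouples across OD pairs because $\mathcal X$ is a product of simplices. On a simplex of mass $d_w$, $\sum_p c_p(y_p-x_p)\ge0$ for all $y$ holds if and only if every $p$ with $x_p>0$ attains $\min_q c_q$. In one direction, move mass $\epsilon$ from a used path $p$ to a path $q$ with smaller cost to obtain a violation. In the other direction, note $\sum_p c_p x_p=d_w\min_q c_q\le\sum_p c_p y_p$. With $c_p=\tilde J_p^\lambda$, this condition is exactly \eqref{eq:wardrop} under the shifted costs.

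\emph{Part (iii).} Suppose $f^1,f^2\in\mathcal E(\mathcal X)$ both satisfy the edge inequality. Test each inequality with the other flow and add the two. The constant term $\lambda\mu$ cancels, leaving $\sum_e(\ell_e(f^1_e)-\ell_e(f^2_e))(f^1_e-f^2_e)\le0$. Each summand is nonnegative by monotonicity, so every summand is zero. Strict monotonicity from (A2) then forces $f^1_e=f^2_e$ for every $e$.

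Non-uniqueness of path flows needs only an example: two distinct paths of one OD pair that use the same edge multiset in different combinations, for instance two crossing routes through a diamond. I expect no real obstacle here. The most delicate point is the correct handling of the domain in (i) and (ii): convexity of $h_e^\lambda$ only on $\mathbb R_{\ge0}$, and the decoupled simplex argument for the Wardrop equivalence.
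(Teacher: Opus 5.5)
Your proposal is correct, and it is the standard Beckmann-potential/variational-inequality argument that the paper invokes; the paper omits the proof and cites Beckmann and Facchinei--Pang. The steps match: convexity plus Weierstrass on the product of scaled simplices, first-order optimality along feasible segments, OD-wise decoupling for the Wardrop form, and monotonicity of $\mathbf C$ for edge-flow uniqueness. The one thing to tighten is the non-uniqueness example, since a single diamond (even with a Braess cross-edge) has a unique path decomposition; instead take two diamonds in series, $s\to\{a,b\}\to m\to\{c,d\}\to t$, with symmetric latencies, where half the demand on each of $(s,a,m,c,t)$ and $(s,b,m,d,t)$ induces the same edge flow as half on each of $(s,a,m,d,t)$ and $(s,b,m,c,t)$, and note that because $\Phi_\lambda$ depends on $x$ only through $f(x)$, every path decomposition of an equilibrium edge flow is itself a minimizer.
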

\noindent Prop.~\ref{prop:reg_eq_basic} is the standard Beckmann
potential and variational-inequality characterization  (\cite{beckmann1956studies,facchinei2003finite}). We omit
its proof.
\subsection{Sensitivity to Regularization}
\label{subsec:sensitivity}
We next study how the equilibrium responds to changes in the
regularization parameter. For a fixed demand, let
$\tau\in\mathbb R^{|\edges|}$ be an additive edge-toll vector.
The resulting edge and path costs are
$\ell_e^\tau(z):=\ell_e(z)+\tau_e$ and
$J_p^\tau(x):=J_p(x)+\sum_{e\in p}\tau_e$, respectively.
A tolled Wardrop equilibrium has an induced edge flow $f^\tau$
satisfying
\begin{equation}
\left\langle
\mathbf C(f^\tau)+\tau,\;g-f^\tau
\right\rangle
\ge0
\qquad
\forall g\in\mathcal \mathcal E(\mathcal X).
\label{eq:vi_tau}
\end{equation}

For separable differentiable latencies, the Pigouvian toll associated
with a socially optimal flow $f^\star$ is
$\tau_e^\star=f_e^\star\ell'_e(f_e^\star)$. The proposed structural
toll is $\tau^\lambda:=\lambda\mu$; for brevity, write
$f^\lambda:=f^{\tau^\lambda}$ and let $f^0=f^{\mathrm{WE}}$ denote
the untolled equilibrium.

\begin{proposition}[Hölder toll sensitivity under BPR latencies]
\label{prop:toll_perturbation}
Suppose
\(
\ell_e(x)=a_e+\beta_e x^p,
\ \  x\ge0,
\)
with a common exponent $p\ge1$ and
$\beta_{\min}:=\min_{e\in\edges}\beta_e>0$.
Let $m:=|\edges|$ and define
\(
\kappa:=\beta_{\min}m^{-(p-1)/2},
\ \ 
R_{\mathrm H}(s):=\left(\frac{s}{\kappa}\right)^{1/p}.
\)
Let $f^\tau$ and $f^\sigma$ be tolled Wardrop equilibrium edge flows
for the same demand, satisfying~\eqref{eq:vi_tau}, and set
$\delta:=\|\tau-\sigma\|$. Then
$\|f^\tau-f^\sigma\|
\le R_{\mathrm H}(\delta)
=
\left(\tfrac{\|\tau-\sigma\|}{\kappa}\right)^{1/p}$.
Thus the equilibrium edge flow is Hölder continuous of order $1/p$
with respect to the toll vector. In particular, for
$\tau^\lambda=\lambda\mu$,
\begin{equation}
\scalebox{0.9}{$\displaystyle
\|f^\lambda-f^{\lambda'}\|
\le
R_{\mathrm H}\bigl(|\lambda-\lambda'|\|\mu\|\bigr),
\quad
\|f^\lambda-f^0\|
\le
R_{\mathrm H}\bigl(\lambda\|\mu\|\bigr)
$}
\label{eq:beta_holder}
\end{equation}
\end{proposition}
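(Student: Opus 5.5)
Add the two variational inequalities \eqref{eq:vi_tau} for $f^\tau$ and $f^\sigma$, then turn strong monotonicity of the latency operator into the Hölder bound. Since $f^\tau$ and $f^\sigma$ are both feasible edge flows for the same demand, we may test \eqref{eq:vi_tau} for $f^\tau$ with $g=f^\sigma$ and the corresponding inequality for $f^\sigma$ with $g=f^\tau$. Summing gives
\[
\bigl\langle \mathbf C(f^\tau)-\mathbf C(f^\sigma),\,f^\tau-f^\sigma\bigr\rangle
\le \bigl\langle \sigma-\tau,\,f^\tau-f^\sigma\bigr\rangle
\le \delta\,\|f^\tau-f^\sigma\|
\]
by Cauchy--Schwarz. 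The remaining work is a lower bound on the left-hand side of order $\|f^\tau-f^\sigma\|^{p+1}$.

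For that lower bound, write the left side as $\sum_e \beta_e (x_e^p-y_e^p)(x_e-y_e)$ with $x=f^\tau$, $y=f^\sigma$ (the constants $a_e$ cancel), and bound it below by $\beta_{\min}\sum_e (x_e^p-y_e^p)(x_e-y_e)$. The key scalar inequality is
\[
(x^p-y^p)(x-y)\ge |x-y|^{p+1}, \qquad x,y\ge0,\ p\ge1 .
\]
To prove it, assume $x\ge y$ and put $h=x-y$; then $x^p=(y+h)^p\ge y^p+h^p$ by superadditivity of $t\mapsto t^p$ on $\mathbb R_{\ge0}$ for $p\ge1$. Summing over edges gives
\[
\langle\cdot,\cdot\rangle\ge \beta_{\min}\sum_e |x_e-y_e|^{p+1}=\beta_{\min}\|x-y\|_{p+1}^{p+1}.
\]

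Next convert to the Euclidean norm. For $p\ge1$ we have $p+1\ge2$, so Hölder's inequality (power-mean comparison) gives $\|v\|_2\le m^{1/2-1/(p+1)}\|v\|_{p+1}$. Raising to the power $p+1$,
\[
\|v\|_{p+1}^{p+1}\ge m^{-(p+1)(1/2-1/(p+1))}\|v\|_2^{p+1}=m^{-(p-1)/2}\|v\|_2^{p+1}.
\]
This is exactly where $\kappa=\beta_{\min}m^{-(p-1)/2}$ comes from. Combining, $\kappa\|f^\tau-f^\sigma\|^{p+1}\le\delta\|f^\tau-f^\sigma\|$. If the difference is zero there is nothing to prove; otherwise divide by $\|f^\tau-f^\sigma\|$ to obtain $\|f^\tau-f^\sigma\|\le(\delta/\kappa)^{1/p}=R_{\mathrm H}(\delta)$.

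The particular case \eqref{eq:beta_holder} follows by taking $\tau=\lambda\mu$ and $\sigma=\lambda'\mu$, so that $\delta=|\lambda-\lambda'|\|\mu\|$. Setting $\lambda'=0$ gives the bound relative to $f^0=f^{\mathrm{WE}}$, the untolled equilibrium. Existence of these equilibria, and uniqueness of the edge flows under strictly increasing latencies, are taken from Prop.~\ref{prop:reg_eq_basic}.

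The main obstacle is the strong-monotonicity step: the operator is only degenerately monotone near zero flow, so no linear (Lipschitz) estimate is available. The superadditivity inequality $(y+h)^p\ge y^p+h^p$ is what rescues a power-$(p+1)$ bound uniformly in the base point. The norm-equivalence constant must be tracked with the correct exponent sign, since $p+1\ge2$ makes the $\ell_{p+1}$ norm the smaller one.
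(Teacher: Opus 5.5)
Your proposal is correct and follows the paper's proof essentially step for step. Both arguments sum the two tested variational inequalities and apply Cauchy--Schwarz, use the edgewise inequality $(x^p-y^p)(x-y)\ge|x-y|^{p+1}$, and compare the $\ell_{p+1}$ and $\ell_2$ norms to obtain $\kappa$; the only difference is that you justify the scalar inequality via superadditivity of $t\mapsto t^p$, which the paper simply asserts.
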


\begin{proof}
Let $\Delta f:=f^\tau-f^\sigma$. Testing the VI for $f^\tau$
with $g=f^\sigma$, and the VI for $f^\sigma$ with $g=f^\tau$,
gives
\(\left\langle\mathbf C(f^\tau)+\tau,\,
f^\sigma-f^\tau\right\rangle\ge0,
\text{ and } 
\left\langle\mathbf C(f^\sigma)+\sigma,\,
f^\tau-f^\sigma\right\rangle\ge0.
\)
Adding them, and applying Cauchy--Schwarz yields
\begin{equation}
\left\langle
\mathbf C(f^\tau)-\mathbf C(f^\sigma),\Delta f
\right\rangle
\le
\left\langle\sigma-\tau,\Delta f\right\rangle
\le
\delta\|\Delta f\|.
\label{eq:vi_comparison}
\end{equation}
\(\langle\mathbf C(f^\tau)-\mathbf C(f^\sigma), \Delta f\rangle
=\sum_{e\in\edges}
\bigl(\ell_e(f_e^\tau)-\ell_e(f_e^\sigma)\bigr)
\bigl(f_e^\tau-f_e^\sigma\bigr).
\)
For $x,y\ge0$ and $p\ge1$,
\(
(x^p-y^p)(x-y)\ge |x-y|^{p+1}.
\)
Since the constant terms $a_e$ cancel, applying this scalar inequality
edgewise gives
$\left\langle
\mathbf C(f^\tau)-\mathbf C(f^\sigma),\Delta f
\right\rangle
=
\sum_{e\in\edges}
\beta_e\bigl((f_e^\tau)^p-(f_e^\sigma)^p\bigr)
       (f_e^\tau-f_e^\sigma)\ge
\beta_{\min}\sum_{e\in\edges}
|\Delta f_e|^{p+1}
=
\beta_{\min}\|\Delta f\|_{p+1}^{p+1}.$

\noindent Since $p+1\ge2$, the finite-dimensional norm inequality for a vector $\Delta f \in \mathbb{R}^m$ gives
\(
\|\Delta f\|_{p+1}
\ge
m^{\,1/(p+1)-1/2}\|\Delta f\|.
\) So,
\(
\|\Delta f\|_{p+1}^{p+1}
\ge
m^{-(p-1)/2}\|\Delta f\|^{p+1}.
\)
Hence,
\(
\left\langle
\mathbf C(f^\tau)-\mathbf C(f^\sigma),\Delta f
\right\rangle
\ge
\kappa\|\Delta f\|^{p+1}.
\)
Combining this with~\eqref{eq:vi_comparison} gives
$\kappa\|\Delta f\|^{p+1}\le\delta\|\Delta f\|$.
If $\Delta f\ne0$, division by $\|\Delta f\|$ yields
\(
\|\Delta f\|
\le
\left(\frac{\delta}{\kappa}\right)^{1/p};
\). Finally,
$\|\tau^\lambda-\tau^{\lambda'}\|
=|\lambda-\lambda'|\|\mu\|$, which gives~\eqref{eq:beta_holder}.
\end{proof}
\noindent Thus, under standard BPR latencies, equilibrium edge flow is
$1/p$-Hölder continuous w.r.t.  changes in the toll vector.

\begin{corollary}[Lipschitz sensitivity under positive flow]
\label{cor:toll_lipschitz}
Under Proposition~\ref{prop:toll_perturbation}, suppose there exists
$\underline f>0$ such that
$f_e^\tau\ne f_e^\sigma\Rightarrow
\min\{f_e^\tau,f_e^\sigma\}\ge\underline f$.
Thus, if every edge whose flow changes remains positive and bounded away
from zero, the Hölder estimate sharpens to a Lipschitz bound. Define
$\alpha:=p\beta_{\min}\underline f^{p-1}$ and
$R_{\mathrm L}(s):=s/\alpha$. Then
\begin{equation}
\|f^\tau-f^\sigma\|
\le R_{\mathrm L}(\delta)
=\tfrac{\|\tau-\sigma\|}{\alpha}.
\label{eq:toll_lipschitz}
\end{equation}
\begin{equation}
\scalebox{0.92}{$\displaystyle
\|f^\lambda-f^{\lambda'}\|
\le R_{\mathrm L}\bigl(|\lambda-\lambda'|\|\mu\|\bigr),
\ 
\|f^\lambda-f^0\|
\le R_{\mathrm L}\bigl(\lambda\|\mu\|\bigr).
$}
\label{eq:beta_lipschitz}
\end{equation}
\end{corollary}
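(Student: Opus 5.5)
We reuse the variational-inequality comparison from the proof of Proposition~\ref{prop:toll_perturbation}. Write $\Delta f:=f^\tau-f^\sigma$. Testing the tolled VI~\eqref{eq:vi_tau} for $f^\tau$ with $g=f^\sigma$, testing the VI for $f^\sigma$ with $g=f^\tau$, and adding the two inequalities gives~\eqref{eq:vi_comparison}:
\[
\langle\mathbf C(f^\tau)-\mathbf C(f^\sigma),\Delta f\rangle\le\delta\|\Delta f\|.
\]
This step needs no further hypotheses. The only change from the Hölder proof is the lower bound on the left-hand side: the quantity $|x-y|^{p+1}$ is replaced by a quadratic strong-monotonicity estimate that uses the positivity floor $\underline f$.

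The key step is a scalar inequality. For $x,y\ge\underline f>0$ and $p\ge1$, we claim
\[
(x^p-y^p)(x-y)\ge p\,\underline f^{\,p-1}(x-y)^2 .
\]
To prove it, apply the mean value theorem: $x^p-y^p=p\xi^{p-1}(x-y)$ for some $\xi$ between $x$ and $y$. Then $\xi\ge\min\{x,y\}\ge\underline f$. Since $p-1\ge0$, the map $t\mapsto t^{p-1}$ is nondecreasing on $(0,\infty)$, so $\xi^{p-1}\ge\underline f^{\,p-1}$. (When $p=1$ the factor is simply $1$.)

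We then apply this inequality edge by edge, splitting the edge set according to whether the flow changes.
\begin{itemize}
\item If $f_e^\tau=f_e^\sigma$, the summand $\beta_e\bigl((f_e^\tau)^p-(f_e^\sigma)^p\bigr)(f_e^\tau-f_e^\sigma)$ vanishes. So does $(\Delta f_e)^2$, so the bound holds trivially.
\item If $f_e^\tau\ne f_e^\sigma$, the hypothesis gives $\min\{f_e^\tau,f_e^\sigma\}\ge\underline f$, and the scalar inequality applies.
\end{itemize}
The constants $a_e$ cancel, and $\beta_e\ge\beta_{\min}$. Summing over edges yields
\[
\langle\mathbf C(f^\tau)-\mathbf C(f^\sigma),\Delta f\rangle\ge p\beta_{\min}\underline f^{\,p-1}\|\Delta f\|^2=\alpha\|\Delta f\|^2 .
\]
This is a Euclidean bound directly, so no norm-equivalence factor in $m$ is needed.

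Combining the two bounds gives $\alpha\|\Delta f\|^2\le\delta\|\Delta f\|$. If $\Delta f=0$ there is nothing to prove; otherwise we divide by $\|\Delta f\|$ to obtain~\eqref{eq:toll_lipschitz}. For~\eqref{eq:beta_lipschitz}, we specialize to $\tau=\lambda\mu$ and $\sigma=\lambda'\mu$, so that $\|\tau-\sigma\|=|\lambda-\lambda'|\|\mu\|$. The bound relative to $f^0$ is the case $\lambda'=0$, with the positivity hypothesis read for that pair of equilibria.

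The main obstacle is the case analysis in the key step. The floor $\underline f$ is only assumed on edges whose flow changes, so the proof has to avoid needing a lower bound on edges with zero or unchanged flow; the vanishing summands there handle this. A second point to note is that the hypothesis is applied pairwise to the specific equilibria $f^\tau,f^\sigma$. If the equilibrium flows are not unique (e.g., if (A2) fails), it must hold for the chosen pair.
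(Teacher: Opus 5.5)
Your proposal is correct and follows essentially the same route as the paper. You reuse the comparison inequality \eqref{eq:vi_comparison}, then apply the mean-value theorem edgewise on edges whose flow changes to get $\langle\mathbf C(f^\tau)-\mathbf C(f^\sigma),\Delta f\rangle\ge\alpha\|\Delta f\|^2$, with unchanged edges contributing nothing; you divide by $\|\Delta f\|$ and specialize to $\tau=\lambda\mu$, $\sigma=\lambda'\mu$. Your extra remarks (the $p=1$ case, taking $\lambda'=0$ for $f^0$, and applying the hypothesis to the chosen pair of equilibria) only make explicit what the paper leaves implicit.
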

\begin{proof}
 For every edge on which the flows
differ, the mean-value theorem gives some $\xi_e$ between
$f_e^\tau$ and $f_e^\sigma$ such that
$\bigl(\ell_e(f_e^\tau)-\ell_e(f_e^\sigma)\bigr)
(f_e^\tau-f_e^\sigma)
=\ell'_e(\xi_e)|f_e^\tau-f_e^\sigma|^2$.
Since $\xi_e\ge\underline f$,
$\ell'_e(\xi_e)=p\beta_e\xi_e^{p-1}\ge\alpha$; edges with equal flows
contribute nothing. Hence
$\langle\mathbf C(f^\tau)-\mathbf C(f^\sigma),\Delta f\rangle
\ge\alpha\|\Delta f\|^2$.
Combining this with~\eqref{eq:vi_comparison} gives
$\alpha\|\Delta f\|^2\le\delta\|\Delta f\|$, which proves
\eqref{eq:toll_lipschitz}; substituting
$\tau=\lambda\mu$ and $\sigma=\lambda'\mu$ gives
\eqref{eq:beta_lipschitz}.
\end{proof}

\noindent Thus, when the relevant edge flows remain bounded away from zero, equilibrium flow varies at most linearly with the toll; we next show that increasing the curvature toll decreases the penalty-weighted aggregate flow on targeted edges.
\begin{proposition}[Penalty-weighted flow monotonicity]
\label{prop:bottleneck_monotonicity}
Under \emph{(A1)}, let $f^\lambda$ and $f^{\lambda'}$ be any
equilibrium edge flows corresponding to $\lambda$ and $\lambda'$.
Then
\begin{equation}
\lambda'\ge\lambda\ge0
\quad\Longrightarrow\quad
\langle\mu,f^{\lambda'}\rangle
\le
\langle\mu,f^\lambda\rangle.
\label{eq:monotonicity}
\end{equation}
\end{proposition}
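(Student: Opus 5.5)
The plan is the standard monotonicity argument for variational inequalities, the same test-function device used in the proof of Proposition~\ref{prop:toll_perturbation}. By Proposition~\ref{prop:reg_eq_basic}(ii), $f^\lambda$ and $f^{\lambda'}$ satisfy the VI~\eqref{eq:vi_tau} with tolls $\tau=\lambda\mu$ and $\sigma=\lambda'\mu$. Both flows correspond to the same demand, so each lies in $\mathcal E(\mathcal X)$ and may be used as a test point in the other's VI.

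Testing the VI for $f^\lambda$ with $g=f^{\lambda'}$, and the VI for $f^{\lambda'}$ with $g=f^\lambda$, gives
\[
\langle \mathbf C(f^\lambda)+\lambda\mu,\, f^{\lambda'}-f^\lambda\rangle\ge0,\qquad
\langle \mathbf C(f^{\lambda'})+\lambda'\mu,\, f^{\lambda}-f^{\lambda'}\rangle\ge0 .
\]
Adding the two and writing $\Delta f:=f^{\lambda'}-f^\lambda$ yields
\[
\langle \mathbf C(f^{\lambda'})-\mathbf C(f^\lambda),\,\Delta f\rangle
+(\lambda'-\lambda)\langle\mu,\Delta f\rangle\le 0 .
\]

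The key step is that the first term is nonnegative under (A1) alone. The latencies are separable, so this term equals $\sum_e(\ell_e(f^{\lambda'}_e)-\ell_e(f^\lambda_e))(f^{\lambda'}_e-f^\lambda_e)$. Each summand is nonnegative because $\ell_e$ is nondecreasing. This gives $(\lambda'-\lambda)\langle\mu,\Delta f\rangle\le0$.

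If $\lambda'>\lambda$, dividing by $\lambda'-\lambda$ gives $\langle\mu,f^{\lambda'}\rangle\le\langle\mu,f^\lambda\rangle$, which is~\eqref{eq:monotonicity}.

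The only delicate point is the degenerate case $\lambda'=\lambda$ when uniqueness fails. Without (A2), two different equilibrium edge flows may exist at the same $\lambda$, and the statement quantifies over \emph{any} equilibria. For this case I would show that $\langle\mu,f\rangle$ is constant across the equilibrium set at a fixed $\lambda$. The first step is that, for a fixed $\lambda$, the solution set of a monotone VI with separable nondecreasing costs has $\ell_e(f_e)$ constant on each edge (cost invariance). This follows because the summed inequality forces each nonnegative summand $(\ell_e(f'_e)-\ell_e(f_e))(f'_e-f_e)$ to vanish. Combined with convexity of the Beckmann minimizer set, this yields the invariance. Then the equilibrium value of $\Phi_\lambda$ is common to all minimizers. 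A cleaner alternative argument is that $\Phi_\lambda$ is convex and its minimizer set is convex. Along the segment between two minimizers $\Phi_\lambda$ is constant, and each integral term is convex in its edge flow, so every edge term is affine along the segment. Each $\int_0^{f_e}\ell_e$ is then affine along the segment. This forces $\ell_e$ to be constant on the interval between $f_e$ and $f'_e$, and therefore $\Phi(f)=\Phi(f')$. Since $\Phi_\lambda(f)=\Phi(f)+\lambda\langle\mu,f\rangle$, it follows that $\lambda\langle\mu,f\rangle=\lambda\langle\mu,f'\rangle$.

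This settles $\lambda>0$; I expect this tie case to be the main obstacle. At $\lambda=\lambda'=0$ the claim can genuinely fail without further assumptions. There I would either read the statement as intending $\lambda'>\lambda$, or assume (A2), under which Proposition~\ref{prop:reg_eq_basic}(iii) makes the equilibrium flow unique.
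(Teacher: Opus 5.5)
Your argument for $\lambda'>\lambda$ is the paper's own proof: test each VI with the other equilibrium, add, drop the monotone term $\langle\mathbf C(f^{\lambda'})-\mathbf C(f^\lambda),\Delta f\rangle\ge0$, and divide by $\lambda'-\lambda$. You are also right that (A1) alone does not cover the tie $\lambda'=\lambda$. The paper handles that case by appealing to uniqueness under (A2), which the statement does not list.

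The gap is your claim that the tie case is nevertheless settled for $\lambda>0$ because $\langle\mu,f\rangle$ is constant on the equilibrium set. The failing step is ``$\ell_e$ is constant on the interval between $f_e$ and $f'_e$, and therefore $\Phi(f)=\Phi(f')$.'' Constancy of $\ell_e$ on that interval only makes $\int_0^{f_e}\ell_e$ affine with slope $\ell_e(f_e)$. That gives $\Phi(f')-\Phi(f)=\langle\mathbf C(f),f'-f\rangle$, which need not vanish. Equality of $\Phi_\lambda$ at the two minimizers says only that $\langle\mathbf C(f)+\lambda\mu,f'-f\rangle=0$: the change in $\Phi$ and the change in $\lambda\langle\mu,\cdot\rangle$ cancel, but neither need be zero. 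Your cost-invariance route has the same defect, since invariance of $\ell_e(f_e)$ says nothing about invariance of $\langle\mu,f\rangle$.

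Here is a concrete counterexample, with $\mu\ge0$ arbitrary as allowed by Definition~\ref{def:regpot}. Take one OD pair $(s,t)$ with demand $1$. Route 1 is the edge $(s,t)$ with $\ell\equiv1$ and $\mu=1$. Route 2 is $s\to a\to t$, whose two edges have $\ell\equiv1$ and $\mu=0$. At $\lambda=1$ both routes have shifted cost $2$, so every split is an equilibrium, and $\langle\mu,f\rangle=f_{(s,t)}$ takes every value in $[0,1]$. Choosing $f^{\lambda'}$ with $f_{(s,t)}=1$ and $f^{\lambda}$ with $f_{(s,t)}=0$ violates \eqref{eq:monotonicity} at $\lambda'=\lambda=1$.

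So under (A1) the tie case fails for every $\lambda$, not only $\lambda=0$. The fallback you reserve for $\lambda=0$ is the only valid resolution and must be applied uniformly: either read the implication with $\lambda'>\lambda$, or assume (A2) and use Proposition~\ref{prop:reg_eq_basic}(iii). The second option is effectively what the paper's proof does.
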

\begin{proof}
If $\lambda'=\lambda$, uniqueness under \emph{(A2)} gives
$f^{\lambda'}=f^\lambda$. Suppose, therefore, that
$\lambda'>\lambda$.
The equilibrium $f^\lambda$ satisfies \eqref{eq:vi_tau} with toll
$\lambda\mu$. Using the feasible flow $f^{\lambda'}$ as the comparison
flow gives
\(
\langle
\mathbf C(f^\lambda)+\lambda\mu,\,
f^{\lambda'}-f^\lambda
\rangle
\ge0.
\)
Similarly, the equilibrium $f^{\lambda'}$ satisfies the variational
inequality with toll $\lambda'\mu$. Using $f^\lambda$ as its comparison
flow gives
\(
\langle
\mathbf C(f^{\lambda'})+\lambda'\mu,\,
f^\lambda-f^{\lambda'}
\rangle
\ge0.
\)
Adding these inequalities and rearranging yields
\(
\langle
\mathbf C(f^{\lambda'})-\mathbf C(f^\lambda),\,
f^{\lambda'}-f^\lambda
\rangle
+
(\lambda'-\lambda)
\langle
\mu,f^{\lambda'}-f^\lambda
\rangle
\le0.
\)
The first term is nonnegative by the monotonicity of the latency
operator under \emph{(A1)}. Since $\lambda'-\lambda>0$, we get
\(
\langle\mu,f^{\lambda'}-f^\lambda\rangle\le0,
\) equivalent to \eqref{eq:monotonicity}.
\end{proof}

\noindent This result tells us that increasing $\lambda$ weakly decreases the weighted
aggregate flow across penalized edges.

\subsection{Efficiency Guarantees}
\label{subsec:efficiency}

\noindent We bound efficiency loss by the mismatch between
the structural toll $\lambda\mu$ and the Pigouvian toll
$\tau^\star$, using two additional assumptions.
\textbf{(A3) Smooth social cost.}
The marginal social cost varies at most linearly with the edge flow:
there exists $L_{\syscost}<\infty$ such that
$\|\nabla\syscost(f)-\nabla\syscost(g)\|
\le L_{\syscost}\|f-g\|$ for all
$f,g\in\mathcal E(\mathcal X)$. Consequently,
\(
\syscost(f)
\le
\syscost(g)
+\langle\nabla\syscost(g),f-g\rangle
+\frac{L_{\syscost}}{2}\|f-g\|^2.
\)
\footnote{For BPR latencies, \emph{(A3)} holds automatically because
$\syscost$ is a polynomial with bounded Hessian on the compact feasible
edge-flow set.}
\textbf{(A4) Componentwise toll approximation:}
For prescribed tolerances $\varepsilon_e\ge0$, there exists
$\lambda\ge0$ such that
$|\lambda\mu_e-\tau_e^\star|\le\varepsilon_e$
for every $e\in\edges$.

\begin{theorem}[Social-cost gap from toll approximation]
\label{thm:social_cost_gap}
Under \emph{(A3)} and the conditions of
Proposition~\ref{prop:toll_perturbation}, let $f^\star$ be socially
optimal and let $\tau_e^\star=f_e^\star\ell'_e(f_e^\star)$ be its
Pigouvian toll, so that $f^\star=f^{\tau^\star}$. For any toll $\tau$,
let $f^\tau$ be its equilibrium at the same demand and define
$\delta:=\|\tau-\tau^\star\|$ and
$B_{\mathrm H}(s):=sR_{\mathrm H}(s)
+(L_{\syscost}/2)R_{\mathrm H}(s)^2$, where $R_{\mathrm H}(s)=(s/\kappa)^{1/p}$. Then
\begin{equation}
\|f^\tau-f^\star\|\le R_{\mathrm H}(\delta),\ \ 
0\le\syscost(f^\tau)-\syscost(f^\star)
\le B_{\mathrm H}(\delta),
\label{eq:sc_gap}
\end{equation}
\end{theorem}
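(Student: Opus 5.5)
The plan has three steps: bound the flow deviation, bound the cost gap, and verify the lower bound.

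\textbf{Flow bound.} The first inequality in \eqref{eq:sc_gap} should follow directly from Proposition~\ref{prop:toll_perturbation} with $\sigma:=\tau^\star$. By the Pigouvian toll property, $f^\star=f^{\tau^\star}$ is a tolled Wardrop equilibrium at the same demand, so it satisfies \eqref{eq:vi_tau} with toll $\tau^\star$. Proposition~\ref{prop:toll_perturbation} then gives $\|f^\tau-f^\star\|\le R_{\mathrm H}(\|\tau-\tau^\star\|)=R_{\mathrm H}(\delta)$ with no further work.

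\textbf{Cost upper bound.} I would apply the descent-lemma consequence of (A3) with $f=f^\tau$ and $g=f^\star$:
\[
\syscost(f^\tau)-\syscost(f^\star)\le\langle\nabla\syscost(f^\star),f^\tau-f^\star\rangle+\tfrac{L_{\syscost}}{2}\|f^\tau-f^\star\|^2 .
\]
The key observation is that $\nabla\syscost(f^\star)_e=\ell_e(f^\star_e)+f^\star_e\ell'_e(f^\star_e)=\ell_e(f^\star_e)+\tau^\star_e$, so $\nabla\syscost(f^\star)=\mathbf C(f^\star)+\tau^\star$. To control the linear term I would use the two variational inequalities rather than the first-order optimality of $f^\star$, since optimality alone gives the wrong sign. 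Testing the VI for $f^\tau$ with $g=f^\star$ gives $\langle\mathbf C(f^\tau)+\tau,f^\tau-f^\star\rangle\le0$. Hence
\[
\langle\mathbf C(f^\star)+\tau^\star,f^\tau-f^\star\rangle
=\langle\mathbf C(f^\tau)+\tau,f^\tau-f^\star\rangle-\langle\mathbf C(f^\tau)-\mathbf C(f^\star),f^\tau-f^\star\rangle+\langle\tau^\star-\tau,f^\tau-f^\star\rangle .
\]
On the right, the first term is $\le0$ by the VI, the second is $\le0$ by monotonicity of $\mathbf C$ under (A1), and the third is at most $\delta\|f^\tau-f^\star\|$ by Cauchy--Schwarz. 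So the linear term is at most $\delta\|f^\tau-f^\star\|\le\delta R_{\mathrm H}(\delta)$. Substituting this and the flow bound into the quadratic term gives $B_{\mathrm H}(\delta)$. This uses that $R_{\mathrm H}$ is nondecreasing, which it is.

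\textbf{Lower bound and main obstacle.} The lower bound $0\le\syscost(f^\tau)-\syscost(f^\star)$ holds because $f^\tau\in\mathcal E(\mathcal X)$ at the same demand and $f^\star$ minimizes $\syscost$ there. The main obstacle is the identification $\nabla\syscost(f^\star)=\mathbf C(f^\star)+\tau^\star$ and getting the correct sign for the linear term. Convexity of $\syscost$ alone would only give $\langle\nabla\syscost(f^\star),f^\tau-f^\star\rangle\ge0$, which is the useless direction. The fix is to route the argument through the VI of $f^\tau$, as above, so that only the toll mismatch $\tau^\star-\tau$ survives. Differentiability of $\ell_e$, needed for $\nabla\syscost$ to exist, is guaranteed by the BPR form assumed in Proposition~\ref{prop:toll_perturbation}.
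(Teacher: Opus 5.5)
Your proposal is correct and follows the paper's proof essentially step for step: Proposition~\ref{prop:toll_perturbation} with $\sigma=\tau^\star$ for the flow bound, the identity $\nabla\syscost(f^\star)=\mathbf C(f^\star)+\tau^\star$, the same three-term split of the linear term using the VI of $f^\tau$ tested at $g=f^\star$, and the descent inequality from \emph{(A3)}. The differences are cosmetic: you discard $-\langle\mathbf C(f^\tau)-\mathbf C(f^\star),f^\tau-f^\star\rangle$ by plain monotonicity, whereas the paper first bounds it by $-\kappa\|f^\tau-f^\star\|^{p+1}$ before dropping it; and the last substitution relies on $t\mapsto\delta t+\tfrac{L_{\syscost}}{2}t^2$ being nondecreasing for $t\ge0$, not on monotonicity of $R_{\mathrm H}$.
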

\begin{proof}
Let $\Delta f:=f^\tau-f^\star$. Since $\tau^\star$ is the Pigouvian
toll at $f^\star$,
$\nabla\syscost(f^\star)=\mathbf C(f^\star)+\tau^\star$.
Taking $g=f^\star$ in the VI for $f^\tau$ gives
$\langle\mathbf C(f^\tau)+\tau,\Delta f\rangle\le0$. Therefore,
$\langle\nabla\syscost(f^\star),\Delta f\rangle
=\langle\mathbf C(f^\tau)+\tau,\Delta f\rangle
-\langle\mathbf C(f^\tau)-\mathbf C(f^\star),\Delta f\rangle
+\langle\tau^\star-\tau,\Delta f\rangle
\le-\langle\mathbf C(f^\tau)-\mathbf C(f^\star),\Delta f\rangle
+\delta\|\Delta f\|$, using Cauchy--Schwarz.
\begin{align}
\text{From \emph{(A3)}, }&\quad
\syscost(f^\tau)-\syscost(f^\star)
\le
\delta\|\Delta f\|
\notag\\
&-\langle\mathbf C(f^\tau)-\mathbf C(f^\star),\Delta f\rangle
+\tfrac{L_{\syscost}}2\|\Delta f\|^2.
\label{eq:common_cost_bound}
\end{align}
Proposition~\ref{prop:toll_perturbation} gives
$\langle\mathbf C(f^\tau)-\mathbf C(f^\star),\Delta f\rangle
\ge\kappa\|\Delta f\|^{p+1}$ and
$\|\Delta f\|\le R_{\mathrm H}(\delta)$. Hence the negative term
in~\eqref{eq:common_cost_bound} is at most
$-\kappa\|\Delta f\|^{p+1}\le0$, and consequently
$\syscost(f^\tau)-\syscost(f^\star)
\le\delta\|\Delta f\|+(L_{\syscost}/2)\|\Delta f\|^2
\le\delta R_{\mathrm H}(\delta)
+(L_{\syscost}/2)R_{\mathrm H}(\delta)^2
=B_{\mathrm H}(\delta)$.
Since $\delta,L_{\syscost}\ge0$, substituting
$\|\Delta f\|\le R_{\mathrm H}(\delta)$ into each term gives the second inequality.
The lower bound follows from the social optimality of $f^\star$.
\end{proof}

\begin{corollary}[Lipschitz social-cost bound under positive flow]
\label{cor:lipschitz_cost_gap}
Under the conditions of Theorem~\ref{thm:social_cost_gap}, suppose the
positive-flow condition of Corollary~\ref{cor:toll_lipschitz} holds for
$f^\tau$ and $f^\star$. Define
$\delta:=\|\tau-\tau^\star\|$,
$\alpha:=p\beta_{\min}\underline f^{p-1}$,
$R_{\mathrm L}(s):=s/\alpha$, and
$\bar L_{\syscost}:=\max\{L_{\syscost},2\alpha\}$. Then $\|f^\tau-f^\star\|
\le R_{\mathrm L}(\delta)=\frac{\delta}{\alpha}$ and
\begin{equation}
0\le\syscost(f^\tau)-\syscost(f^\star)
\le
({\bar L_{\syscost}}/{2\alpha^2})\delta^2.
\label{eq:lipschitz_cost_gap}
\end{equation}

\end{corollary}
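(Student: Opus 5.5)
We prove Corollary~\ref{cor:lipschitz_cost_gap} by re-running the proof of Theorem~\ref{thm:social_cost_gap}, but with the Hölder monotonicity estimate replaced by the strong-monotonicity estimate from the proof of Corollary~\ref{cor:toll_lipschitz}.

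\emph{Flow bound.} Set $\sigma=\tau^\star$. Since $f^\star=f^{\tau^\star}$ is the tolled equilibrium for $\tau^\star$, Corollary~\ref{cor:toll_lipschitz} applies directly to the pair $(f^\tau,f^\star)$ under the assumed positive-flow condition. This gives
\[
\langle\mathbf C(f^\tau)-\mathbf C(f^\star),\Delta f\rangle\ge\alpha\|\Delta f\|^2
\quad\text{and}\quad
\|\Delta f\|\le\delta/\alpha=R_{\mathrm L}(\delta),
\]
where $\Delta f:=f^\tau-f^\star$.

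\emph{Cost bound.} We start from the intermediate inequality~\eqref{eq:common_cost_bound}. It was derived using only (A3), the VI for $f^\tau$, and the Pigouvian identity $\nabla\syscost(f^\star)=\mathbf C(f^\star)+\tau^\star$, so it still holds here. Inserting the strong-monotonicity lower bound gives
\[
\syscost(f^\tau)-\syscost(f^\star)\le \delta t-\alpha t^2+\tfrac{L_{\syscost}}{2}t^2,\qquad t:=\|\Delta f\|\in[0,\delta/\alpha].
\]
We now bound this scalar expression by cases on the sign of the quadratic coefficient.

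\emph{Case $L_{\syscost}\le 2\alpha$.} Then $\bar L_{\syscost}=2\alpha$ and the right-hand side is at most $\delta t-(\alpha-L_{\syscost}/2)t^2\le\delta t$. With $t\le\delta/\alpha$ this gives at most $\delta^2/\alpha=(2\alpha/2\alpha^2)\delta^2$, which is the claim.

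\emph{Case $L_{\syscost}>2\alpha$.} The coefficient $(L_{\syscost}/2-\alpha)$ is positive, so the expression is increasing in $t$. Its value at $t=\delta/\alpha$ is
\[
\frac{\delta^2}{\alpha}-\frac{\delta^2}{\alpha}+\frac{L_{\syscost}\delta^2}{2\alpha^2}=\frac{L_{\syscost}}{2\alpha^2}\delta^2=\frac{\bar L_{\syscost}}{2\alpha^2}\delta^2 .
\]

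The lower bound $0\le\syscost(f^\tau)-\syscost(f^\star)$ is immediate from the social optimality of $f^\star$.

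The only delicate point is keeping the $-\alpha t^2$ term rather than discarding it as in the Hölder proof. Dropping it would give the weaker constant $(2\alpha+L_{\syscost})/(2\alpha^2)$. Retaining it is what produces the $\max\{L_{\syscost},2\alpha\}$ form, and the case split handles both signs of the quadratic coefficient cleanly.
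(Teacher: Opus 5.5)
Your proof is correct and follows the paper's argument. Both start from \eqref{eq:common_cost_bound}, insert the strong-monotonicity bound $\alpha\|\Delta f\|^2$ and the flow bound $\|\Delta f\|\le\delta/\alpha$ from Corollary~\ref{cor:toll_lipschitz}, and evaluate at $\|\Delta f\|=\delta/\alpha$; the only cosmetic difference is that the paper avoids your case split by replacing $L_{\syscost}$ with the larger constant $\bar L_{\syscost}$ in (A3), which makes the coefficient $\bar L_{\syscost}/2-\alpha$ nonnegative in one step.
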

\begin{proof}
Corollary~\ref{cor:toll_lipschitz} gives
$\|\Delta f\|\le\delta/\alpha$. The positive-flow condition also gives
$\langle\mathbf C(f^\tau)-\mathbf C(f^\star),\Delta f\rangle
\ge\alpha\|\Delta f\|^2$. Using the smoothness constant
$\bar L_{\syscost}$ in~\eqref{eq:common_cost_bound} yields
\(
\syscost(f^\tau)-\syscost(f^\star)
\le
\delta\|\Delta f\|
+
\left(\tfrac{\bar L_{\syscost}}2-\alpha\right)\|\Delta f\|^2\) and thus ~\eqref{eq:lipschitz_cost_gap}.
\end{proof}
\noindent Thus, when all changing edge flows are bounded away from zero, the
equilibrium is Lipschitz in the toll vector and the social-cost loss is
quadratic in the toll approximation error.
\begin{table}[t]
\centering
\vspace{10pt} 
\caption{\small Computation times for FRC, EBC, and the
social-optimum solve used to compute Pigouvian tolls.}
\label{tab:runtime}
\footnotesize
\begin{tabular}{lrrrr}
\toprule
\textbf{Network} & \textbf{$|\edges|$} & \textbf{FRC (ms)} & \textbf{EBC (ms)} & \textbf{SO (s)} \\
\midrule
Sioux Falls        &    76 &     0.1 &       1.0 &     0.5 \\
Eastern Mass.      &   258 &     0.5 &       9.3 &     2.3 \\
Berlin-Fried.      &   523 &     1.2 &      87.6 &     2.4 \\
Berlin-Mitte       &   871 &     2.2 &     287.4 &     6.7 \\
Anaheim            &   914 &     2.1 &     320.9 &     4.3 \\
Barcelona          &  2522 &     6.8 &    1978.6 &    43.1 \\
Winnipeg           &  2836 &     6.1 &    2128.5 &    56.3 \\
Chicago-Sketch     &  2950 &     6.3 &    1782.5 &   124.7 \\
\bottomrule
\end{tabular}
\end{table}

\begin{corollary}[Efficiency bounds from toll approximation]
\label{cor:combined_efficiency}

Under Theorem~\ref{thm:social_cost_gap}, let
$\varepsilon:=\|\lambda\mu-\tau^\star\|$. Then
$\|f^\lambda-f^\star\|\le R(\varepsilon),\text{ and }
0\le\syscost(f^\lambda)-\syscost(f^\star)\le B(\varepsilon)$.
If \emph{(A4)} holds and
$E_\varepsilon:=(\sum_{e\in\edges}\varepsilon_e^2)^{1/2}$, the
same bounds hold with $E_\varepsilon$.
\end{corollary}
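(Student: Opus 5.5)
The plan is to specialize Theorem~\ref{thm:social_cost_gap} to the structural toll $\tau=\lambda\mu$, and then check that the bounding functions are monotone. Here $R$ and $B$ denote $R_{\mathrm H}$ and $B_{\mathrm H}$, or $R_{\mathrm L}$ and $(\bar L_{\syscost}/2\alpha^2)s^2$ when the positive-flow condition of Corollary~\ref{cor:lipschitz_cost_gap} holds.

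First, the toll $\lambda\mu$ is a fixed additive edge-toll vector, so by Proposition~\ref{prop:reg_eq_basic} its equilibrium $f^\lambda$ satisfies the variational inequality~\eqref{eq:vi_tau} with $\tau=\lambda\mu$. Theorem~\ref{thm:social_cost_gap} therefore applies with $\delta=\|\lambda\mu-\tau^\star\|=\varepsilon$. This gives $\|f^\lambda-f^\star\|\le R(\varepsilon)$ and $0\le\syscost(f^\lambda)-\syscost(f^\star)\le B(\varepsilon)$ directly. The Lipschitz variant follows in the same way from Corollary~\ref{cor:lipschitz_cost_gap}.

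Second, for the (A4) statement, take the $\lambda$ that (A4) provides. Componentwise, $|\lambda\mu_e-\tau_e^\star|\le\varepsilon_e$, so
\(
\varepsilon=\bigl(\sum_{e}|\lambda\mu_e-\tau_e^\star|^2\bigr)^{1/2}\le E_\varepsilon .
\)
It remains to pass from $\varepsilon$ to $E_\varepsilon$, which needs $R$ and $B$ to be nondecreasing on $[0,\infty)$.
\begin{itemize}
\item $R_{\mathrm H}(s)=(s/\kappa)^{1/p}$ is increasing because $\kappa>0$ and $p\ge1$.
\item $B_{\mathrm H}(s)=sR_{\mathrm H}(s)+(L_{\syscost}/2)R_{\mathrm H}(s)^2$ is a sum of products of nonnegative nondecreasing functions, since $L_{\syscost}\ge0$, so it is nondecreasing.
\item The Lipschitz forms $s/\alpha$ and $(\bar L_{\syscost}/2\alpha^2)s^2$ are trivially monotone for $s\ge0$.
\end{itemize}
Hence $R(\varepsilon)\le R(E_\varepsilon)$ and $B(\varepsilon)\le B(E_\varepsilon)$.

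The only point needing care is bookkeeping rather than a real obstacle. The $\lambda$ in the (A4) bound must be the one supplied by (A4), and $f^\lambda$ must be an equilibrium of the same demand as $f^\star$, which is what Theorem~\ref{thm:social_cost_gap} requires. For the Lipschitz version, the positive-flow condition must also hold for the pair $(f^\lambda,f^\star)$. The lower bound $0\le\syscost(f^\lambda)-\syscost(f^\star)$ is again just the social optimality of $f^\star$.
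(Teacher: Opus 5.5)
Your proposal is correct and follows the paper's proof: you specialize Theorem~\ref{thm:social_cost_gap} to $\tau=\lambda\mu$, and then, under (A4), use $\varepsilon\le E_\varepsilon$ together with the monotonicity of $R_{\mathrm H}$ and $B_{\mathrm H}$. Your explicit monotonicity check, the reminder that the $\lambda$ must be the one supplied by (A4), and the Lipschitz variant are extra detail that the paper's two-line proof leaves implicit.
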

\begin{proof}
Apply Theorem~\ref{thm:social_cost_gap} with $\tau=\lambda\mu$.
Under \emph{(A4)}, $\varepsilon\le E_\varepsilon$, since  $R_{\mathrm H}$ and
$B_{\mathrm H}$ are monotone, the claim
 follows.
\end{proof}

\noindent The preceding efficiency bounds depend on the toll
approximation error $\|\regparam\penweightvec-\tau^\star\|$.
For a fixed penalty vector $\penweightvec$, the following result
identifies the optimal scale $\regparam\geq0$ and expresses the
remaining error through the cosine similarity between
$\penweightvec$ and $\tau^\star$. This makes the role of
structural alignment in the efficiency guarantee explicit.

\begin{proposition}[Structural-toll approximation quality]
\label{prop:frc_approx}
Let $\tau^\star\ne0$ and $\mu\ne0$ be componentwise nonnegative
vectors representing, respectively, the Pigouvian toll associated with
a socially optimal flow $f^\star$ and the structural penalty. Define
$\rho:=\langle\mu,\tau^\star\rangle/
(\|\mu\|\|\tau^\star\|)$. Then
$\lambda^\star:=\langle\mu,\tau^\star\rangle/\|\mu\|^2
=\rho\|\tau^\star\|/\|\mu\|$ minimizes
$\|\lambda\mu-\tau^\star\|^2$ over $\lambda\ge0$, with
\begin{equation}
\min_{\lambda\ge0}\|\lambda\mu-\tau^\star\|^2
=(1-\rho^2)\|\tau^\star\|^2.
\label{eq:min_approx_err}
\end{equation}
Define $\delta^\star:=\sqrt{1-\rho^2}\|\tau^\star\|$. From Theorem~\ref{thm:social_cost_gap}, the equilibrium
$f^{\lambda^\star}$ under toll $\lambda^\star\mu$ satisfies
\begin{equation}
0\le
\syscost(f^{\lambda^\star})-\syscost(f^\star)
\le B_{\mathrm H}(\delta^\star).
\label{eq:rho_cost_bound}
\end{equation}
\end{proposition}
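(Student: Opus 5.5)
The plan is to treat the first part as a one-dimensional projection problem and then invoke Theorem~\ref{thm:social_cost_gap} with $\tau=\lambda^\star\mu$.

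\textbf{Step 1: expand the squared error.} Write
\[
h(\lambda):=\|\lambda\mu-\tau^\star\|^2=\lambda^2\|\mu\|^2-2\lambda\langle\mu,\tau^\star\rangle+\|\tau^\star\|^2 .
\]
This is a strictly convex quadratic in $\lambda$ because $\mu\ne0$. Its unconstrained minimizer is $\langle\mu,\tau^\star\rangle/\|\mu\|^2$.

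\textbf{Step 2: check feasibility of the minimizer.} Since $\mu$ and $\tau^\star$ are componentwise nonnegative, $\langle\mu,\tau^\star\rangle\ge0$. Hence the unconstrained minimizer already lies in $[0,\infty)$, and it is also the minimizer over $\lambda\ge0$. This is the only place the nonnegativity hypothesis is used. It is the step most easily overlooked, so I will state it explicitly. The identity $\lambda^\star=\rho\|\tau^\star\|/\|\mu\|$ then follows by substituting the definition of $\rho$.

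\textbf{Step 3: compute the minimum value.} Substituting $\lambda^\star$ gives
\[
h(\lambda^\star)=\|\tau^\star\|^2-\langle\mu,\tau^\star\rangle^2/\|\mu\|^2=(1-\rho^2)\|\tau^\star\|^2 ,
\]
which is \eqref{eq:min_approx_err}. I would note that Cauchy--Schwarz gives $0\le\rho\le1$, so $\delta^\star$ is a well-defined real number. Geometrically, $\lambda^\star\mu$ is the orthogonal projection of $\tau^\star$ onto the ray spanned by $\mu$.

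\textbf{Step 4: transfer to the cost bound.} Apply Theorem~\ref{thm:social_cost_gap} with $\tau=\lambda^\star\mu$, so that $\delta=\|\lambda^\star\mu-\tau^\star\|=\delta^\star$. This gives
\[
0\le\syscost(f^{\lambda^\star})-\syscost(f^\star)\le B_{\mathrm H}(\delta^\star),
\]
which is \eqref{eq:rho_cost_bound}. Equivalently, one can invoke Corollary~\ref{cor:combined_efficiency} with $\varepsilon=\delta^\star$.

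For the final step, I will remark that the hypotheses of the theorem are assumed to hold at the same demand, so no further verification is needed. I will also note that $B_{\mathrm H}$ is monotone, so any better alignment (larger $\rho$) tightens the bound. There is no genuine obstacle; the main care point is the sign argument in Step 2 that justifies dropping the constraint $\lambda\ge0$.
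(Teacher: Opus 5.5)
Your proposal is correct and follows essentially the same route as the paper: you expand the quadratic, use componentwise nonnegativity to show the unconstrained minimizer satisfies $\lambda^\star\ge0$, substitute to obtain $(1-\rho^2)\|\tau^\star\|^2$, and then apply Theorem~\ref{thm:social_cost_gap} with $\tau=\lambda^\star\mu$, so that $\delta=\delta^\star$. Your extra remarks (Cauchy--Schwarz giving $0\le\rho\le1$, the projection interpretation, and the alternative via Corollary~\ref{cor:combined_efficiency}) are harmless additions.
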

\begin{proof}
Expanding the squared error gives
$\|\lambda\mu-\tau^\star\|^2
=\lambda^2\|\mu\|^2
-2\lambda\langle\mu,\tau^\star\rangle
+\|\tau^\star\|^2$. This strictly convex quadratic is minimized at
$\lambda^\star=\langle\mu,\tau^\star\rangle/\|\mu\|^2$.
Componentwise nonnegativity gives $\lambda^\star\ge0$, so this is
the minimizer over $\lambda\ge0$. Substitution gives
$\|\lambda^\star\mu-\tau^\star\|^2
=(1-\rho^2)\|\tau^\star\|^2$, proving
\eqref{eq:min_approx_err}. 
As $\delta^\star=\|\lambda^\star\mu-\tau^\star\|$,
\eqref{eq:rho_cost_bound} follows from
Theorem~\ref{thm:social_cost_gap}.
\end{proof}

\noindent The quantity $\rho$ measures directional alignment between
$\mu$ and $\tau^\star$, and $\sqrt{1-\rho^2}$ is the smallest relative
toll-approximation error.\footnote{Computing $\rho$ or
$\lambda^\star$ requires $\tau^\star$ and therefore constitutes oracle
calibration. The vector $\mu$, however, is fixed from the graph,
latency parameters, and base-demand AON loading.}
If the positive-flow condition of
Corollary~\ref{cor:lipschitz_cost_gap} also holds, we get
$\syscost(f^{\lambda^\star})-\syscost(f^\star)
\le\bar L_{\syscost}(1-\rho^2)\|\tau^\star\|^2/(2\alpha^2)$.

\subsection{Demand Robustness }
\label{subsec:robustness}

\noindent The preceding bounds compare a given toll with the
Pigouvian toll at the same demand. In practice, a toll calibrated
at $d_0$ may remain in use after demand changes to $d_{new}$.
We therefore compare the combined toll
$\tau^\star(d_0)+\lambda\mu$ with the stale toll
$\tau^\star(d_0)$ at $d_{new}$. The following theorem identifies
an alignment condition under which the structural correction
yields a smaller certified social-cost bound.

\begin{theorem}[Demand robustness of the combined toll]
\label{thm:demand_robustness}
Under \emph{(A3)} and Proposition~\ref{prop:toll_perturbation}, let
$\tau^\star(d)$ and $f^\star(d)$ denote the Pigouvian toll and social
optimum at demand $d$. At demand $d_{new}$, let $f^c$ be the equilibrium
under $\tau^c:=\tau^\star(d_0)+\regparam\penweightvec$, and set
$\Delta\tau^\star:=\tau^\star(d_{new})-\tau^\star(d_0)$,
$\delta_c:=\|\regparam\penweightvec-\Delta\tau^\star\|$, and
$\delta_s:=\|\Delta\tau^\star\|$. Then
\begin{equation}
0\le\syscost(f^c)-\syscost(f^\star(d_{new}))
\le B_{\mathrm H}(\delta_c).
\label{eq:robustness_gap}
\end{equation}
The stale toll $\tau^\star(d_0)$ has certified bound
$B_{\mathrm H}(\delta_s)$. If
\begin{equation}
\langle\regparam\penweightvec,\Delta\tau^\star\rangle
>\tfrac12\regparam^2\|\penweightvec\|^2,
\label{eq:dominance_condition}
\end{equation}
then $B_{\mathrm H}(\delta_c)<B_{\mathrm H}(\delta_s)$, so the
combined toll has a strictly smaller certificate than the stale
toll.
\end{theorem}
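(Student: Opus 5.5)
The argument has two parts. The first is a direct application of Theorem~\ref{thm:social_cost_gap} at the new demand; the second is a short algebraic comparison of the two error radii combined with monotonicity of $B_{\mathrm H}$.

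\emph{Certified gap for the combined toll.} Fix the demand at $d_{new}$ and apply Theorem~\ref{thm:social_cost_gap} with reference optimum $f^\star:=f^\star(d_{new})$, its Pigouvian toll $\tau^\star(d_{new})$, and test toll $\tau:=\tau^c=\tau^\star(d_0)+\regparam\penweightvec$. The hypotheses (A3) and those of Proposition~\ref{prop:toll_perturbation} are stated independently of demand, so they apply at $d_{new}$. The constants $\kappa$ and $L_{\syscost}$ depend only on the latencies and the feasible set, so I will take $L_{\syscost}$ valid on $\mathcal E(\mathcal X)$ at $d_{new}$. The toll mismatch is
\[
\tau^c-\tau^\star(d_{new})=\regparam\penweightvec-\Delta\tau^\star ,
\]
so the $\delta$ in Theorem~\ref{thm:social_cost_gap} equals $\delta_c$, and \eqref{eq:robustness_gap} follows at once. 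The same application with $\tau=\tau^\star(d_0)$ gives mismatch $-\Delta\tau^\star$ and hence the stale certificate $B_{\mathrm H}(\delta_s)$.

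\emph{Comparison of the certificates.} Expanding the square gives
\[
\delta_c^2=\regparam^2\|\penweightvec\|^2-2\langle\regparam\penweightvec,\Delta\tau^\star\rangle+\delta_s^2 .
\]
Condition \eqref{eq:dominance_condition} says exactly that $\regparam^2\|\penweightvec\|^2-2\langle\regparam\penweightvec,\Delta\tau^\star\rangle<0$, so $\delta_c^2<\delta_s^2$, and therefore $\delta_c<\delta_s$ since both are nonnegative. It remains to show that $B_{\mathrm H}$ is strictly increasing on $[0,\infty)$. We have $R_{\mathrm H}(s)=(s/\kappa)^{1/p}$ with $\kappa>0$ and $p\ge1$, so $R_{\mathrm H}$ is strictly increasing and nonnegative. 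Then $sR_{\mathrm H}(s)$ is strictly increasing as a product of nonnegative increasing functions that is positive for $s>0$, and $(L_{\syscost}/2)R_{\mathrm H}(s)^2$ is nondecreasing because $L_{\syscost}\ge0$. Their sum is strictly increasing, so $B_{\mathrm H}(\delta_c)<B_{\mathrm H}(\delta_s)$.

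\emph{Main obstacle.} The only delicate point is bookkeeping rather than analysis: the constants $\kappa$, $L_{\syscost}$, and $\beta_{\min}$ in $B_{\mathrm H}$ must be the same at both tolls being compared. They are, because both certificates are computed at the same demand $d_{new}$ with the same latencies, so $B_{\mathrm H}$ is a single function. Condition \eqref{eq:dominance_condition} also forces $\regparam\penweightvec\ne0$ and hence $\delta_s>0$, but the strict comparison does not need this, since strict monotonicity of $B_{\mathrm H}$ on $[0,\infty)$ is enough.
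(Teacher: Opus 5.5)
Your proposal is correct and follows the paper's proof essentially line for line: you apply Theorem~\ref{thm:social_cost_gap} at $d_{new}$ with mismatches $\regparam\penweightvec-\Delta\tau^\star$ and $-\Delta\tau^\star$, expand $\delta_c^2$ to show that \eqref{eq:dominance_condition} is equivalent to $\delta_c<\delta_s$, and conclude by strict monotonicity of $B_{\mathrm H}$. Your explicit check that $B_{\mathrm H}$ is strictly increasing, and your note that $L_{\syscost}$ must be taken on the feasible edge-flow set at $d_{new}$, fill in details the paper only asserts.
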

\begin{proof}
At demand $d_{new}$, the combined- and stale-toll errors relative to
the current Pigouvian toll are
$\tau^c-\tau^\star(d_{new})
=\regparam\penweightvec-\Delta\tau^\star$ and
$\tau^s-\tau^\star(d_{new})=-\Delta\tau^\star$, respectively.
Applying Theorem~\ref{thm:social_cost_gap} to these two errors gives
\eqref{eq:robustness_gap}.

Moreover,
$\delta_c^2
=\delta_s^2+\regparam^2\|\penweightvec\|^2
-2\langle\regparam\penweightvec,\Delta\tau^\star\rangle$.
Thus, \eqref{eq:dominance_condition} is equivalent to
$\delta_c^2<\delta_s^2$, and hence to $\delta_c<\delta_s$.
Since $B_{\mathrm H}$ is strictly increasing, this implies
$B_{\mathrm H}(\delta_c)<B_{\mathrm H}(\delta_s)$.
\end{proof}

\begin{corollary}[Positive-flow demand-robustness bound]
\label{cor:demand_robustness_lipschitz}
Suppose, the positive-flow condition of
Corollary~\ref{cor:lipschitz_cost_gap} holds for both
$(f^c,f^\star(d_{new}))$ and $(f^s,f^\star(d_{new}))$ with a common
lower bound $\underline f$. Define
$\alpha:=p\beta_{\min}\underline f^{p-1}$,
$\bar L_{\syscost}:=\max\{L_{\syscost},2\alpha\}$, and
$B_{\mathrm L}(s):=\bar L_{\syscost}s^2/(2\alpha^2)$. Then
\(
\syscost(f^c)-\syscost(f^\star(d_{new}))
\le B_{\mathrm L}(\delta_c),
\ \
\syscost(f^s)-\syscost(f^\star(d_{new}))
\le B_{\mathrm L}(\delta_s).
\)
Under~\eqref{eq:dominance_condition},
$B_{\mathrm L}(\delta_c)<B_{\mathrm L}(\delta_s)$.
\end{corollary}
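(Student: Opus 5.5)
The plan is to reduce Corollary~\ref{cor:demand_robustness_lipschitz} to Corollary~\ref{cor:lipschitz_cost_gap}, applied twice at demand $d_{new}$, and then reuse the algebraic identity from the proof of Theorem~\ref{thm:demand_robustness}.

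First, fix the demand at $d_{new}$ and take $f^\star(d_{new})$ as the social optimum, with Pigouvian toll $\tau^\star(d_{new})$, so that $f^\star(d_{new})=f^{\tau^\star(d_{new})}$.

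\emph{Combined toll.} Take $\tau=\tau^c=\tau^\star(d_0)+\lambda\mu$. Its deviation from the current Pigouvian toll is $\tau^c-\tau^\star(d_{new})=\lambda\mu-\Delta\tau^\star$, so $\|\tau^c-\tau^\star(d_{new})\|=\delta_c$. The positive-flow condition is assumed for the pair $(f^c,f^\star(d_{new}))$ with lower bound $\underline f$. Corollary~\ref{cor:lipschitz_cost_gap} therefore applies with the stated $\alpha$ and $\bar L_{\syscost}$, and gives
\[
\syscost(f^c)-\syscost(f^\star(d_{new}))\le \bar L_{\syscost}\delta_c^2/(2\alpha^2)=B_{\mathrm L}(\delta_c).
\]

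\emph{Stale toll.} Take $\tau=\tau^s=\tau^\star(d_0)$. Here $\tau^s-\tau^\star(d_{new})=-\Delta\tau^\star$, so the error norm is $\delta_s$. The same corollary, now applied to the pair $(f^s,f^\star(d_{new}))$, yields $B_{\mathrm L}(\delta_s)$.

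One point needs care. Because $\underline f$ is common to both pairs, $\alpha$ and $\bar L_{\syscost}$ are the same in both applications. This is what makes the two certificates evaluations of the same function $B_{\mathrm L}$, and hence directly comparable.

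For the comparison, expand
\[
\delta_c^2=\delta_s^2+\lambda^2\|\mu\|^2-2\langle\lambda\mu,\Delta\tau^\star\rangle,
\]
exactly as in the proof of Theorem~\ref{thm:demand_robustness}. Condition~\eqref{eq:dominance_condition} is then equivalent to $\delta_c^2<\delta_s^2$. Since $\alpha>0$ (because $\underline f>0$ and $\beta_{\min}>0$) and $\bar L_{\syscost}\ge 2\alpha>0$, the map $B_{\mathrm L}(s)=\bar L_{\syscost}s^2/(2\alpha^2)$ is a positive multiple of $s^2$. Hence $B_{\mathrm L}(\delta_c)<B_{\mathrm L}(\delta_s)$ follows immediately, without needing to pass through $\delta_c<\delta_s$.

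I expect no real obstacle; the argument is bookkeeping. The only step worth stating explicitly is that Corollary~\ref{cor:lipschitz_cost_gap} is invoked at demand $d_{new}$ for an arbitrary toll $\tau$, which is legitimate because Theorem~\ref{thm:social_cost_gap} holds for any fixed demand. The stale toll is simply the case $\lambda=0$ of the combined toll.
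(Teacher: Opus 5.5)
Your proposal is correct and follows the paper's proof. You apply Corollary~\ref{cor:lipschitz_cost_gap} at demand $d_{new}$ to both the combined and stale tolls, with errors $\lambda\mu-\Delta\tau^\star$ and $-\Delta\tau^\star$, and then use the expansion of $\delta_c^2$ from Theorem~\ref{thm:demand_robustness}. The only cosmetic difference is that you conclude from $\delta_c^2<\delta_s^2$ using the quadratic form of $B_{\mathrm L}$, whereas the paper passes through $\delta_c<\delta_s$ and the strict monotonicity of $B_{\mathrm L}$.
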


\begin{proof}
Apply Cor.~\ref{cor:lipschitz_cost_gap} to both toll comparisons.
From \eqref{eq:dominance_condition} 
$\delta_c<\delta_s$; $B_{\mathrm L}$ is strictly increasing, so $B_{\mathrm L}(\delta_c)<B_{\mathrm L}(\delta_s)$.
\end{proof}

\section{EXPERIMENTAL EVALUATION}

\begin{figure*}[t]
  \centering
  \includegraphics[width=0.85\textwidth]{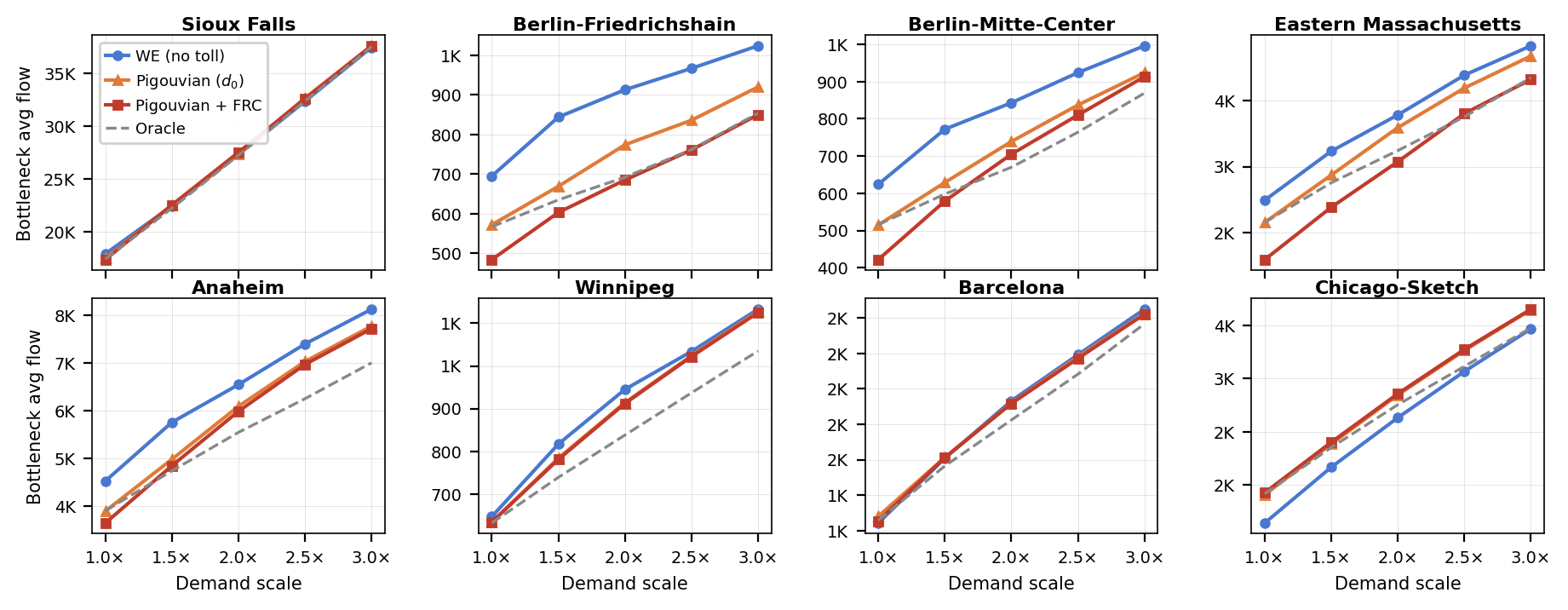}
\caption{\small Bottleneck average flow under targeted demand
scaling. Tolls are fixed at $d_0$; the oracle is recomputed at
each scale. Pig.+FRC tracks the oracle closely where FRC
identifies bottlenecks}
  \label{fig:bot_trajectories}
\end{figure*}
\begin{figure*}[t]
  \centering
  \includegraphics[width=0.85\textwidth]{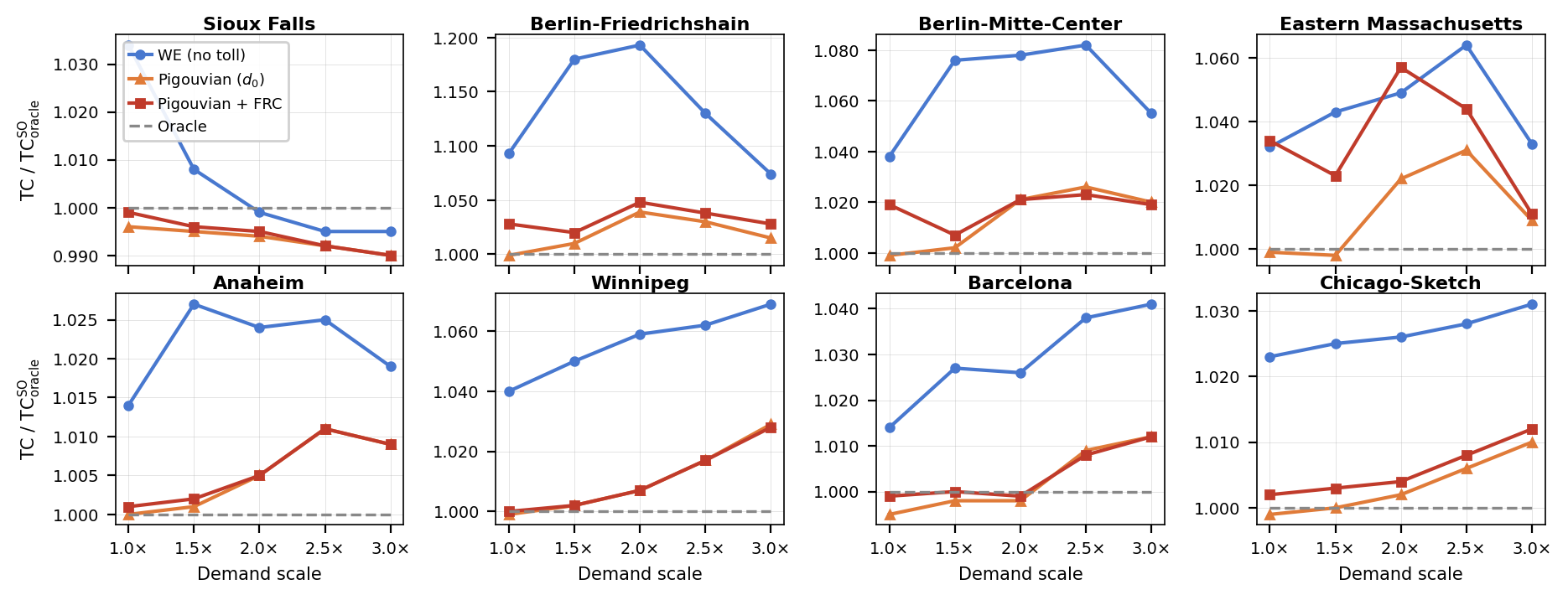}
  \caption{ \small Total system cost relative to oracle SO
  ($\mathrm{TC}/\mathrm{TC}^{\mathrm{SO}}_{\mathrm{oracle}}$) vs.\ demand
  scale. Tolls fixed at base demand $d_0$; oracle recomputed fresh at each
  scale. 
  Pig.~+~FRC incurs a small  cost of 0.5--2\% for improved bottleneck flow
  protection (Fig.~\ref{fig:bot_trajectories}).}
  \label{fig:tc_ratio}
\end{figure*}

\begin{figure*}[!t]
\centering
\includegraphics[width=0.85\textwidth]{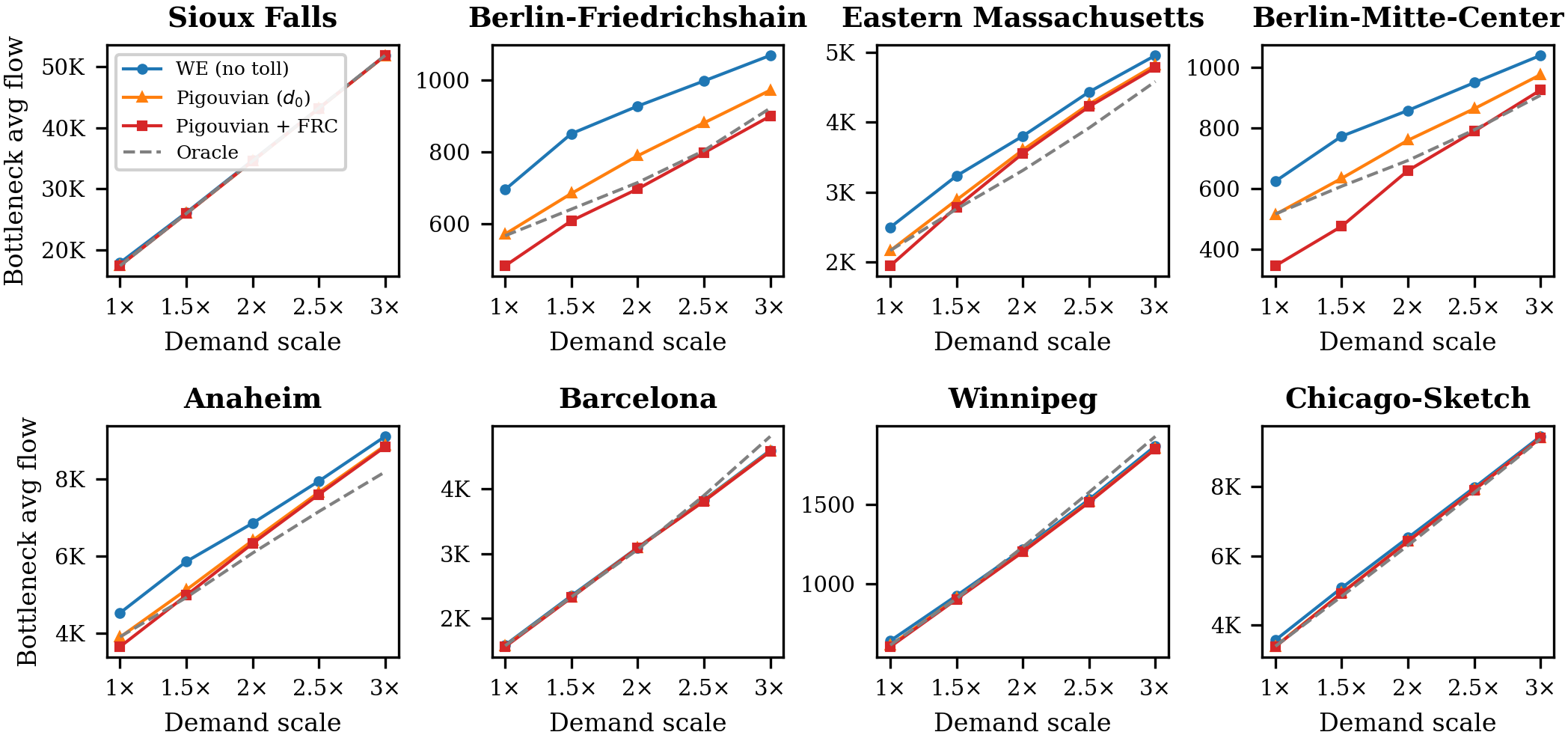}
\caption{Bottleneck average flow under \emph{uniform} demand scaling
(all OD pairs scaled equally from $1\times$ to $3\times$). The combined
toll provides bottleneck protection even when demand is not specifically
targeted at bottleneck corridors.}
\label{fig:uniform_stress}
\end{figure*}

\noindent\textbf{Setup.}
We evaluate four toll policies: untolled Wardrop equilibrium (WE),
stale Pigouvian $\tau^\star(d_0)$, combined Pigouvian+FRC
$\tau^\star(d_0)+\lambda\mu$, and an oracle (SO recomputed fresh at each
demand level), across eight benchmark networks ranging from 76 to
2,950 links.
Bottleneck edges satisfy $FRC \leq -4$ and $\tfrac{volume}{capacity} \geq 0.5$ at the
base WE, comprising 0.6--13.2\% of links depending on the network. We select $\lambda$ by a grid search at $d_0$ that minimizes
bottleneck-flow deviation from the base SO, then hold it fixed
across demand scales.
As a structural baseline, we use \emph{edge betweenness
centrality} (EBC), the fraction of shortest paths passing
through each edge--a
topology-derived alternative to FRC identifying bottleneck edges.
We form a combined toll $\tau^\star(d_0) + \lambda_{\mathrm{EBC}}\mu^{\mathrm{EBC}}$
with $\mu^{\mathrm{EBC}}_e = \mathrm{EBC}(e) \cdot \ell'_e(f_e^{\mathrm{AON}})$,
scaling $\lambda_{\mathrm{EBC}}$ so the penalty norms are matched.

We assess (i) oracle-flow tracking under demand shifts, (ii) robustness
to uniform scaling and (iii) protection and computational cost relative
to EBC.

\noindent\textbf{Targeted and uniform stress.}
Under targeted demand scaling, Pig+FRC substantially reduces the
oracle-flow deviation relative to the stale Pigouvian toll
(Fig.~\ref{fig:bot_trajectories}); for example, the deviations are
1.1 versus 73.1 vehicles on Berlin-Friedrichshain and 28.5 versus
324.3 on Eastern Massachusetts. This protection incurs approximately
0.5--2\% additional total cost (Fig.~\ref{fig:tc_ratio}). The effect
persists when all OD demands are scaled uniformly
(Fig.~\ref{fig:uniform_stress}), although no policy improves
bottleneck flow on Chicago-Sketch.
\noindent\textbf{Computational cost.}
FRC is $8$--$347\times$ faster than edge betweenness
(Table~\ref{tab:runtime}); on Barcelona, the respective times are
6.8\,ms and 2.0\,s. Both remain small relative to the SO solve, making
FRC's advantage most relevant for larger or repeatedly evaluated
networks.



\noindent\textbf{Conclusions.}
We proposed an FRC-based correction to Pigouvian tolls calibrated
at a base demand. Our analysis establishes conditions for efficiency
and robustness to demand shifts, while experiments show favorable
tradeoffs between total travel cost and bottleneck protection under
targeted and uniform demand scaling. The FRC-based toll is substantially cheaper to compute than EBC, while the
fixed correction avoids recomputing Pigouvian tolls after
demand shifts. Future work
will explore online calibration, broader structural guarantees,
and dynamic traffic models.

\bibliographystyle{IEEEtran}  
\bibliography{references}




\end{document}